\newtheorem{theorem}{Theorem}
\newtheorem{lemma}{Lemma}
\newtheorem{corollary}{Corollary}[lemma]
\newtheorem{remark}{Remark}[lemma]
\newcommand{\ab}{|}
\newcommand{\der}{\partial}
\newcommand{\de}{\mathrm{d}}
\newcommand{\X}{\mathsf{x}}
\newcommand{\Z}{\mathsf{z}}
\newcommand{\tomega}{\tilde{\omega}}
\newcommand{\tdelta}{\tilde{\delta}}
\newcommand{\tsigma}{\tilde{\sigma}}
\newcommand{\e}{\mathrm{e}}
\title{Collapsing universe before time}
\author[a]{Gary Shiu,}
\affiliation[a]{Department of Physics, University of Wisconsin-Madison, 1150 University Avenue, Madison, WI 53706, USA}
\author[b]{Flavio Tonioni,}
\affiliation[b]{Instituut voor Theoretische Fysica, KU Leuven, Celestijnenlaan 200D, B-3001 Leuven, Belgium}
\author[c]{Hung V. Tran}
\affiliation[c]{Department of Mathematics, University of Wisconsin-Madison, 480 Lincoln Drive, Madison, WI 53706, USA}
\emailAdd{shiu@physics.wisc.edu}
\emailAdd{flavio.tonioni@kuleuven.be}
\emailAdd{hung@math.wisc.edu}
\abstract{In this note, we prove analytic bounds on the equation of state of a cosmological fluid composed of an arbitrary number of canonical scalars evolving in a negative multi-exponential potential. Because of the negative energy, the universe is contracting and eventually undergoes a big crunch. A contracting universe is a fundamental feature of models of ekpyrosis and cyclic universes, which are a proposed alternative to cosmic inflation to solve the flatness and horizon problems. Our analytic bounds set quantitative constraints on general effective theories of ekpyrosis. When applied to specific top-down constructions, our bounds can be used to determine whether ekpyrosis could in principle be realized. We point out some possible sources of tension in realizing the ekpyrotic universe in controlled constructions of string theory.}
\begin{document}
\maketitle

\section{Introduction}

In physics, one sometimes uncovers new insights by considering the time direction in reverse. For instance, the positron described by negative-energy solutions of the Dirac equation was given the interpretation of an electron moving backward in time. Another famous example is Hawking's singularity theorem: if we reverse the expansion of the universe, the cosmic scale factor approaches zero, leading to a cosmic singularity. If the stress tensor satisfies the dominant energy condition, this cosmic singularity is unavoidable. During cosmic inflation \cite{Starobinsky:1980te,Guth:1980zm,Linde:1981mu,Albrecht:1982wi}, the approximately constant positive potential energy violates the dominant energy condition, and it was originally thought that the initial big-bang singularity could be avoided. However, it has since been shown that inflationary cosmologies are past-incomplete \cite{Borde:2001nh}; see though ref. \cite{Lesnefsky:2022fen}. It remains an open problem to account for the big-bang singularity prior to inflation.

The ekpyrotic universe \cite{Khoury:2001wf,Khoury:2001bz} was proposed as an alternative to inflation. A defining feature of this scenario is the transition from a contracting universe to an expanding phase. While the known laws of physics break down at the cosmic singularity, it has been argued that the cosmological perturbations generated during the contracting phase can account for the observed large-scale structure of the universe. In the present work, we are agnostic about the viability of this scenario. The focus of our study is on the contracting phase far from the cosmic singularity, where the laws of physics as we know them remain valid.

Recently, we derived sharp theoretical bounds on late-time cosmic acceleration \cite{Shiu:2023nph, Shiu:2023fhb}. As we show in the present note, similar arguments can be applied to constrain scenarios involving a collapsing universe. In detail, we find analytic bounds on the equation of state ($w$-parameter) of a cosmological fluid composed of an arbitrary number of canonical scalar fields evolving in a negative multi-exponential potential. Because of the negative energy, the universe
contracts and eventually undergoes a big crunch. The results we derive complement those on cosmic acceleration, which apply to the same class of theories if the potential is positive. In fact, the results of the present note represent a fundamental step in the path towards a full characterization of the model-independent features of scalar-field cosmologies with multi-field multi-exponential potentials, and generalizations thereof. For instance, recently, ref. \cite{VanRiet:2023cca} pointed out that, although negative potential terms reduce deceleration, they do so around solutions that are perturbatively unstable. This is a further obstruction to late-time cosmic acceleration in string compactifications. Our result is going to be an important step in addressing this problem beyond linear stability of exact solutions.

A contracting universe is a fundamental feature of models of ekpyrosis and cyclic universes \cite{Khoury:2001wf, Khoury:2001bz, Steinhardt:2002ih}, where negative exponential potentials are good candidates for modelling steep field-space directions. Just like in refs. \cite{Shiu:2023fhb, Shiu:2023nph}, we exploit the formulation of the cosmological equations in terms of an autonomous system of ordinary differential equations.\footnote{This formulation has recently been widely used; see e.g. refs. \cite{Conlon:2022pnx, Rudelius:2022gbz, Marconnet:2022fmx, Apers:2022cyl, Hebecker:2023qke, VanRiet:2023cca, Andriot:2023wvg, Revello:2023hro}.} Our method allows us to bound $w$ without identifying an actual solution to the field equations, providing an alternative to the analysis of linear perturbation followed e.g. by refs. \cite{Heard:2002dr, Lehners:2007ac, Koyama:2007mg, Tzanni:2014eja}; see also refs. \cite{Giambo:2014jfa, Giambo:2015tja} for non-perturbative results independent of a potential of exponential form.

The paper is organized as follows. In sec. \ref{sec: universal big-crunch bound}, we describe our analytic bounds on $w$. In sec. \ref{sec: phenomenology of ekpyrosis}, we discuss the implications of such results for phenomenological descriptions of ekpyrosis and comment on ekpyrosis in string compactifications with a few examples. In sec. \ref{sec: conclusions}, we recap the main conclusions. In app. \ref{app: bounds on ekpyrotic universes}, we show the mathematical proofs of our bounds, while in apps. \ref{app: negative-potential scaling solutions} and \ref{app: curvature anisotropies} we report useful results to facilitate reading the main text.

\section{Universal big-crunch bound} \label{sec: universal big-crunch bound}
In this section we present the main analytic results. Given a theory of $n$ canonical scalars $\phi^a$, for $a=1,\dots,n$, we consider a non-compact $d$-dimensional spacetime characterized by the usual $d$-dimensional FLRW-metric
\begin{equation} \label{FLRW-metric}
    d \tilde{s}_{1,d-1}^2 = - \de t^2 + a^2(t) \, \de l_{\mathbb{E}^{d-1}}^2,
\end{equation}
with the Hubble parameter $\smash{H = \dot{a}/a}$ and $d>2$. A cosmological fluid is characterized by an equation of state $p = w \rho$. Here, $p$ is the pressure and $\rho$ is the energy density of the fluid, and $w$ is the equation-of-state parameter. In FLRW-cosmology, a canonical scalar-field theory has a $w$-parameter given by $\smash{w = (T - V)/(T + V)}$, where $T$ and $V$ are the kinetic and potential energy densities, respectively.

Here, we find universal bounds on the $w$-parameter for theories with arbitrary multi-field multi-exponential potentials $V$ under the assumption that their sum is negative, i.e. $V<0$. The cosmological equations indicate that the parameter $\smash{\epsilon = -\dot{H}/H^2 = 1-a \ddot{a}/\dot{a}^2}$ is related to the parameter $w$ as
\begin{equation}
    w = -1 + \dfrac{2 \epsilon}{d-1}.
\end{equation}
Therefore, the condition that $w \gg 1$, which we will look for below, corresponds to the condition $- a \ddot{a}/\dot{a}^2 \gg d-2$; in particular, it implies that $\ddot{a} < 0$, which in case of a contracting universe, i.e. with $\dot{a}<0$, means that the scale factor is decreasing in an accelerated way, namely it decreases at a faster rate as time passes.

\subsection{Negative-definite potentials}

To start, we consider a negative-definite multi-field multi-exponential potential
\begin{equation} \label{generic negative-definite exponential potential}
    V = - \sum_{i = 1}^m K_i \, \e^{- \kappa_d \gamma_{i a} \phi^a},
\end{equation}
with $K_i > 0$ for all terms  $i=1,\dots,m$. Let $t_0$ be an initial time such that the Hubble parameter is negative, i.e. $H(t_0) < 0$. Then, there exists a finite time $t_\star$, such that $\smash{t_0 < t_\star \leq t_0 - 1/[(d-1) H(t_0)]}$, at which the Hubble parameter diverges, i.e. $H(t_\star) = - \infty$. A proof is in appendix \ref{app: bounds on ekpyrotic universes}; see corollary \ref{corollary: g-divergence (E)}.

For each field $\phi^a$, let $\smash{\gamma^a = \min_i {\gamma_{i}}^a}$ and $\smash{\Gamma^a = \max_i {\gamma_{i}}^a}$; then, let $\smash{\Gamma_\star^a = \max \lbrace \ab \gamma^a \ab, \ab \Gamma^a \ab \rbrace}$. If $\smash{(\Gamma_\star)^2 \geq \Gamma_d^2 = 4 \, (d-1)/(d-2)}$, then there exists a time $\smash{\bar{t}}$ in the interval $\smash{t_0 \leq \bar{t} < t_\star}$, such that, at all times $t$ in the interval $\smash{\bar{t} \leq t < t_\star}$, the $w$-parameter of the theory is bounded from above as
\begin{equation} \label{upper w-bound for negative-definite potentials}
    w \leq -1 + \dfrac{1}{2} \, \dfrac{d-2}{d-1} \, (\Gamma_\star)^2.
\end{equation}
If $\smash{(\Gamma_\star)^2 < \Gamma_d^2}$, then sufficiently close to the time $\smash{t=t_\star}$ the $w$-parameter is arbitrarily close to $w=1$. This is because the potential is sufficiently shallow that eventually, in a rapidly contracting universe, the potential energy becomes negligible compared to the kinetic energy.\footnote{This is opposite for positive potentials \cite{Shiu:2023fhb}, for which the universe is expanding and it is a steep potential that leads to the kinetic energy overshadowing all the potential energy.} A rigorous mathematical proof of the bound in eq. (\ref{upper w-bound for negative-definite potentials}) is in appendix \ref{app: bounds on ekpyrotic universes}; see corollary \ref{corollary: (x)^2-bounds (E)}. To identify the optimal version of the bound, we can work in a basis such that $\Gamma^a \geq 0$ for all fields $\phi^a$. If $\ab \Gamma^a \ab \geq \ab \gamma^a \ab$, this is automatic; else, if $\ab \Gamma^a \ab < \ab \gamma^a \ab$, one can just flip the field sign and work in the new basis. In the end, one has $\smash{\Gamma_\star^a = \Gamma^a \geq 0}$. The optimal bound is the one where the length of the vector $\Gamma_\star$ is minimal, which can be achieved by an $\mathrm{O}(n)$-rotation in field space. Therefore, the optimal version of the bound can be expressed as
\begin{equation} \label{optimal upper w-bound for negative-definite potentials}
    w \leq -1 + \dfrac{1}{2} \, \dfrac{d-2}{d-1} \, (\tilde{\Gamma}_\star)^2,
\end{equation}
in a basis $\smash{(\tilde{\phi}^a)_{a=1}^n}$, where we defined
\begin{equation}
    (\tilde{\Gamma}_\star)^2 = \min_{\mathrm{R} \in \mathrm{O}(n)} [\Gamma_\star(\mathrm{R})]^2.
\end{equation}
Here, $\smash{\mathrm{R} \in \mathrm{O}(n)}$ indicates all possible $\mathrm{O}(n)$-rotations in the $n$-dimensional field-space basis and $\smash{[\Gamma_\star(\mathrm{R})]^2}$ represents the $\smash{(\Gamma_\star)^2}$-coefficient computed in the $\smash{\mathrm{R}^{-1}}$-rotated field-space basis.

It turns out that the $w$-parameter is also bounded from below. For each field $\phi^a$, if $\gamma^a > 0$, let $\smash{\gamma_\diamond^a = \gamma^a}$; if $\gamma^a \leq 0$, then let $\smash{\gamma_\diamond^a = 0}$. Then, if $\smash{(\gamma_\diamond)^2 \geq \Gamma_d^2}$, the $w$-parameter is also bounded from below as
\begin{equation} \label{lower w-bound for negative-definite potentials}
    w \geq -1 + \dfrac{1}{2} \, \dfrac{d-2}{d-1} \, (\gamma_\diamond)^2.
\end{equation}
If $\smash{(\gamma_\diamond)^2 < \Gamma_d^2}$, then the bound reads trivially because one always has $w \geq 1$. Rigorous mathematical proofs of the bound in eq. (\ref{lower w-bound for negative-definite potentials}) are in appendix \ref{app: bounds on ekpyrotic universes}; see corollary \ref{corollary: (x)^2 lower bound (E)}. Formally, this is clearly analogous to the late-time bound on cosmic acceleration for positive-definite potentials discussed in ref. \cite{Shiu:2023fhb}, and in this case, too, there exists a field-space basis in which the lower bound is optimized, i.e. pushed to the highest value as possible. To find the optimal value of $\smash{(\gamma_\diamond)^2}$ one must find the distance of the origin from the coupling convex hull. We may write
\begin{equation} \label{optimal lower w-bound for negative-definite potentials}
    w \geq -1 + \dfrac{1}{2} \, \dfrac{d-2}{d-1} \, (\hat{\gamma}_\diamond)^2,
\end{equation}
in a basis $\smash{(\hat{\phi}^a)_{a=1}^n}$, where
\begin{equation}
    (\hat{\gamma}_\diamond)^2 = \max_{\mathrm{R} \in \mathrm{O}(n)} [\gamma_\diamond(\mathrm{R})]^2.
\end{equation}
Here we used a different notation for the basis since, in general, the field bases where the upper bound in eq. (\ref{upper w-bound for negative-definite potentials}) and the lower bound in eq. (\ref{lower w-bound for negative-definite potentials}) are optimal do not coincide.

It is convenient to represent the couplings in a vector space, in terms of vectors $\mu_i$ with components $\smash{(\mu_i)_a = \gamma_{ia}}$, like in ref. \cite{Shiu:2023fhb}. A graphical interpretation of the bounds in eqs. (\ref{upper w-bound for negative-definite potentials}, \ref{lower w-bound for negative-definite potentials}) and of their optimal form in eq. (\ref{optimal upper w-bound for negative-definite potentials}, \ref{optimal lower w-bound for negative-definite potentials}), is in figs. \ref{fig.: optimal ekpyrosis bound 1} and \ref{fig.: optimal ekpyrosis bound 2}.

\begin{figure}[ht]
    \centering
    \begin{tikzpicture}[xscale=1.05,yscale=1.05,every node/.style={font=\normalsize}]
    
    \node[align=left] at (6.6,3.8){$\phi^a = {\color{cyan} \phi^1}, {\color{magenta} \phi^2}$ \\[1.0ex] $\gamma_{ia} = \left(\!\begin{array}{cc}
    \gamma_{1 \color{cyan} 1} & \gamma_{1 \color{magenta} 2} \\
    \gamma_{2 \color{cyan} 1} & \gamma_{2 \color{magenta} 2} \\
    \gamma_{3 \color{cyan} 1} & \gamma_{3 \color{magenta} 2}
    \end{array}\!\right)$};

    \draw[densely dotted,green!80!orange,fill=green!10!white] (-0.5,3.5) -- (-1.9,4.2) -- (-2.5,3) -- (2.9,0.3) -- (3.5,1.5) -- (-0.5,3.5);
    \draw[densely dotted,green!80!orange] (-2.5,3) -- (-3.2,1.6);

    \draw[orange,fill=orange!15!white] (3.5,1.5) -- (-2.5,3) -- (-0.5,3.5) -- (3.5,1.5);

    \draw[->, thin, green] (0.8,0) arc (0:atan(2):0.8);
    \draw[->, thin, green] (0,0.8) arc (90:90+atan(2):0.8);

    \draw[->, thin, green!45!olive!55!white] (0,0) -- (3.5,3.5) node[above right,black]{};    
    \draw[dotted, thin, green!45!olive!55!white] (3.5,0) -- (3.5,3.5) node[above right,black]{};
    \draw[dotted, thin, green!45!olive!55!white] (-0.5,3.5) -- (3.5,3.5) node[above right,black]{};
    
    \draw[->, ultra thin, gray!60!white] (-3,0) -- (4.5,0) node[below]{$\gamma_{\star 1}$};
    \draw[->, ultra thin, gray!60!white] (0,-0.8) -- (0,5.0) node[left]{$\gamma_{\star 2}$};
    
    \draw[->, thick, teal] (0,0) -- (3.5,1.5) node[right,black]{$\mu_1$};
    \draw[densely dotted, ultra thin, gray!60!white] (3.5,1.5) -- (3.5,0) node[below]{$\gamma_{1 \color{cyan!30!white} 1}$};
    \draw[densely dotted, ultra thin, gray!60!white] (3.5,1.5) -- (0,1.5) node[left]{$\gamma_{1 \color{magenta!30!white} 2}$};

    \draw[->, thick, teal] (0,0) -- (-2.5,3) node[left,black]{$\mu_2$};
    \draw[densely dotted, ultra thin, gray!60!white] (-2.5,3) -- (-2.5,0) node[below]{$\gamma_{2 \color{cyan!30!white} 1}$};
    \draw[densely dotted, ultra thin, gray!60!white] (-2.5,3) -- (0,3) node[left]{$\gamma_{2 \color{magenta!30!white} 2}$};

    \draw[->, thick, teal] (0,0) -- (-0.5,3.5) node[left,black]{$\mu_3$};
    \draw[densely dotted, ultra thin, gray!60!white] (-0.5,3.5) -- (-0.5,0) node[below]{$\gamma_{3 \color{cyan!30!white} 1}$};
    \draw[densely dotted, ultra thin, gray!60!white] (-0.5,3.5) -- (0,3.5) node[above left]{$\gamma_{3 \color{magenta!30!white} 2}$};

    \draw[->,rotate=atan(2), ultra thin] (-0.8,0) -- (3.5,0) node[right]{$\tilde{\gamma}_{\star 1}$};
    \draw[->,rotate=atan(2), ultra thin] (0,-0.8) -- (0,4) node[below]{$\tilde{\gamma}_{\star 2}$};

    \draw[->, thick, green!65!olive] (0,0) -- (-1.9,4.2) node[above]{$(\tilde{\Gamma}_\star)^2$};

    \draw[->, thick, purple] (0,0) -- (19/34,38/17) node[above]{$(\hat{\gamma}_\diamond)^2$};
    
    \end{tikzpicture}
    \caption{A representation of the optimal $w$-parameter bound $\smash{w \leq -1 + [(d-2)/(d-1)] \, (\tilde{\Gamma}_\star)^2/2}$: lighter lines denote the original field basis, with a non-optimal upper bound, while darker lines denote the basis with the  minimal upper bound; the lower bound $\smash{(\hat{\gamma}_\diamond)^2}$ is also displayed.}
    \label{fig.: optimal ekpyrosis bound 1}
\end{figure}
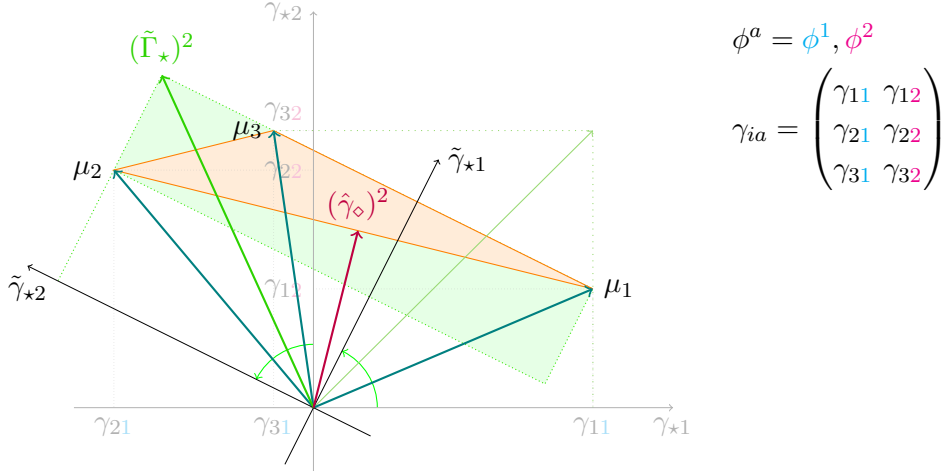

As shown in fig. \ref{fig.: optimal ekpyrosis bound 1}, one has to build the rectangular box with the smallest area that contains all the coupling vectors $\mu_i$. This box has $2^n$ vertices and the largest squared distance between the origin and such vertices provides the smallest -- i.e. the optimal -- value for the quantity $\smash{(\tilde{\Gamma}_\star)^2}$. In general, the problem must be treated as a multi-field one as, effectively, the optimal bound is such that the $w$-parameter receives contributions from each field direction\footnote{This is opposite to the late-time behavior of scalar cosmologies with positive-definite potentials \cite{Shiu:2023nph, Shiu:2023fhb}, where one needs to compute the minimal distance of the origin from the coupling convex hull: such a distance is related to the fact that the problem is asymptotically one-dimensional in field space.} in the way that makes all directions as shallow as possible, compatibly with a field-basis rotation.
The fact that negative-definite potentials, contrarily to positive-definite ones, do not allow one to easily single out a unique field direction that is relevant asymptotically can be understood due to a high sensitivity to the initial conditions. Positive-definite potentials have attractors that correspond to valleys of the potential \cite{Shiu:2023fhb}, so an overall negative sign turns the valley into a crest separating steep cliffs on its sides. These instabilities have also been discussed by refs. \cite{Levy:2015awa, Ijjas:2021zyf}. Without specifying the initial conditions, which our analysis is independent of, the bounds in eqs. (\ref{optimal upper w-bound for negative-definite potentials}, \ref{optimal lower w-bound for negative-definite potentials}) are not necessarily saturated by an actual time-dependent solution. In other words, depending on the initial conditions, a solution may either saturate our bounds or satisfy a strict inequalities. It is also possible to provide a physical interpretation of the upper bound of eq. (\ref{optimal upper w-bound for negative-definite potentials}) in the optimal basis, since in the latter we can express the potential as
\begin{equation}
    V = V_\star + \overline{V},
\end{equation}
where
\begin{equation} \label{optimal-bound basis potential}
        V_\star = - \sum_{\underaccent{\dot}{a}=1}^n \e^{- \kappa_d \tilde{\Gamma}_{\star \underaccent{\dot}{a}} \tilde{\phi}^{\underaccent{\dot}{a}}} \biggl[ \sum_{i_{\underaccent{\dot}{a}}} K_{i_{\underaccent{\dot}{a}}} \, \e^{- \kappa_d \sum_{\underaccent{\dot}{b} \neq \underaccent{\dot}{a}} \tilde{\gamma}_{i_{\underaccent{\dot}{a}} {\underaccent{\dot}{b}}} \tilde{\phi}^{\underaccent{\dot}{b}}} \biggr].
\end{equation}
Here, we have split the potential into two parts: the part $V_\star$ contains, for at least one field, the terms with the largest-possible coupling available in the potential; the part $\overline{V}$ is such that all fields therein appear with couplings that, in absolute value, are smaller than the largest available coupling, which is in one of the terms making up $V_\star$. In particular, in $V_\star$, we have split the sum over the potential terms into $n$ sums of subsets of terms labelled by an index $\smash{i_{\underaccent{\dot}{a}}}$, and the $\cdot$-underscript means that there is no Einstein summation. In each term, the exponential coupling $\smash{\tilde{\Gamma}_{\star \underaccent{\dot}{a}}}$ for the field $\smash{\tilde{\phi}^{\underaccent{\dot}{a}}}$
is larger than those appearing in other potential terms for the same field, in absolute value. Therefore, $\smash{\tilde{\Gamma}_{\star \underaccent{\dot}{a}}}$ determines the maximum steepness that the field $\smash{\tilde{\phi}^{\underaccent{\dot}{a}}}$ can experience, which intuitively provides the upper bound for the $w$-parameter in eq. (\ref{optimal upper w-bound for negative-definite potentials}). On the other hand, in the basis that provides the optimal lower bound of eq. (\ref{optimal lower w-bound for negative-definite potentials}), it is easy to see that $\smash{\hat{\gamma}_\diamond}$ provides the minimum steepness that at least one field universally experiences from all potential terms. The remaining fields can possibly increase further the steepness, hence eq. (\ref{optimal lower w-bound for negative-definite potentials}) is a lower bound.\footnote{This is analogous to the interpretation of the lower bound on late-time acceleration for positive-definite potentials in refs. \cite{Shiu:2023nph, Shiu:2023fhb}.}

\begin{figure}[ht]
    \centering
    \begin{tikzpicture}[xscale=0.92,yscale=0.92,every node/.style={font=\normalsize}]
    
    \node[align=left] at (8.0,3.8){$\phi^a = {\color{cyan} \phi^1}, {\color{magenta} \phi^2}$ \\[1.0ex] $\gamma_{ia} = \left(\!\begin{array}{cc}
    \gamma_{1 \color{cyan} 1} & \gamma_{1 \color{magenta} 2} \\
    \gamma_{2 \color{cyan} 1} & \gamma_{2 \color{magenta} 2} \\
    \gamma_{3 \color{cyan} 1} & \gamma_{3 \color{magenta} 2}
    \end{array}\!\right)$};

    \begin{scope}
        \clip (-2.6,-1.8) rectangle (5,5);
        \fill[cyan!5!white] (5,2) -- (-2,5) -- (-5,-2) -- (2,-5) -- (5,2);
    \end{scope}

    \draw[densely dotted,green!80!orange,fill=green!10!white] (5,2) -- (-1,4) -- (-1.95,1.15) -- (4.05,-0.85) -- (5,2);    

    \begin{scope}
        \clip (-2.6,-1.8) rectangle (5,5);
        \draw[dotted, cyan] (5,2) -- (-2,5) -- (-5,-2) -- (2,-5) -- (5,2);
    \end{scope}

    \draw[orange,fill=orange!15!white] (5,2) -- (-1,4) -- (-1.5,1) -- (5,2);

    \draw[->, thin, green] (0.8,0) arc (0:-atan(10/31):0.8);
    \draw[->, thin, green] (0,0.8) arc (90:90-atan(10/31):0.8);
    
    \draw[->, ultra thin, gray!60!white] (-1.5,0) -- (5.5,0) node[right]{$\gamma_{\star 1}$};
    \draw[->, ultra thin, gray!60!white] (0,-0.8) -- (0,5.0) node[left]{$\gamma_{\star 2}$};
    
    \draw[->, thick, teal] (0,0) -- (5,2) node[right,black]{$\mu_1$};
    \draw[densely dotted, ultra thin, gray!60!white] (5,2) -- (5,0) node[below]{$\gamma_{1 \color{cyan!30!white} 1}$};
    \draw[densely dotted, ultra thin, gray!60!white] (5,2) -- (0,2) node[left]{$\gamma_{1 \color{magenta!30!white} 2}$};

    \draw[->, thick, teal] (0,0) -- (-1,4) node[left,black]{$\mu_2$};
    \draw[densely dotted, ultra thin, gray!60!white] (-1,4) -- (-1,0) node[below]{$\gamma_{2 \color{cyan!30!white} 1}$};
    \draw[densely dotted, ultra thin, gray!60!white] (-1,4) -- (0,4) node[left]{$\gamma_{2 \color{magenta!30!white} 2}$};

    \draw[->, thick, teal] (0,0) -- (-1.5,1) node[below left,black]{$\mu_3$};
    \draw[densely dotted, ultra thin, gray!60!white] (-1.5,1) -- (-1.5,0) node[below]{$\gamma_{3 \color{cyan!30!white} 1}$};
    \draw[densely dotted, ultra thin, gray!60!white] (-1.5,1) -- (0,1) node[below left]{$\gamma_{3 \color{magenta!30!white} 2}$};

    \draw[->, thin, green!45!olive!55!white] (0,0) -- (5,4) node[above right,black]{};    
    \draw[dotted, thin, green!45!olive!55!white] (5,0) -- (5,4) node[above right,black]{};
    \draw[dotted, thin, green!45!olive!55!white] (-1,4) -- (5,4) node[above right,black]{};

    \draw[->,rotate=-atan(10/31), ultra thin] (-0.8,0) -- (5.0,0) node[right]{$\tilde{\gamma}_{\star 1}$};
    \draw[->,rotate=-atan(10/31), ultra thin] (0,-0.8) -- (0,4.6) node[left]{$\tilde{\gamma}_{\star 2}$};

    \draw[->, thick, green!65!olive] (0,0) -- (5,2) node[below,pos=0.65]{$(\tilde{\Gamma}_\star)^2$};

    \draw[->, thick, purple] (0,0) -- (-32/173, 208/173) node[right]{$(\hat{\gamma}_\diamond)^2$};

    \begin{scope}[xshift=265pt,scale=0.4]
    
        \fill[cyan!5!white] (5,2) -- (-2,5) -- (-5,-2) -- (2,-5) -- (5,2);
        
        \draw[densely dotted,green!80!orange,fill=green!10!white] (5,2) -- (-1,4) -- (-1.95,1.15) -- (4.05,-0.85) -- (5,2);
        
        \draw[dotted, cyan] (5,2) -- (-2,5) -- (-5,-2) -- (2,-5) -- (5,2);

        \draw[orange,fill=orange!15!white] (5,2) -- (-1,4) -- (-1.5,1) -- (5,2);
    
        \draw[->, thick, teal] (0,0) -- (5,2) node[right,black]{$\mu_1$};

        \draw[->, thick, teal] (0,0) -- (-1,4) node[left,black]{$\mu_2$};

        \draw[->, thick, teal] (0,0) -- (-1.5,1) node[below left,black]{$\mu_3$};

        \draw[->, thick, green!65!olive] (0,0) -- (5,2) node[below,pos=0.65]{$(\tilde{\Gamma}_\star)^2$};

        \draw[->, thick, purple] (0,0) -- (-32/173, 208/173) node[above right]{$(\hat{\gamma}_\diamond)^2$};
    \end{scope}
    
    \end{tikzpicture}
    \caption{A representation of the optimal $w$-parameter bound $\smash{w \leq -1 + [(d-2)/(d-1)] \, (\tilde{\Gamma}_\star)^2/2}$: lighter lines denote the original field basis, with a non-optimal upper bound, while darker lines denote the basis with the minimal upper bound; the lower bound $\smash{(\hat{\gamma}_\diamond)^2}$ is also displayed. The largest square box with a coupling as a vertex is shown, too, with a rescaled version on the side.}
    \label{fig.: optimal ekpyrosis bound 2}
\end{figure}
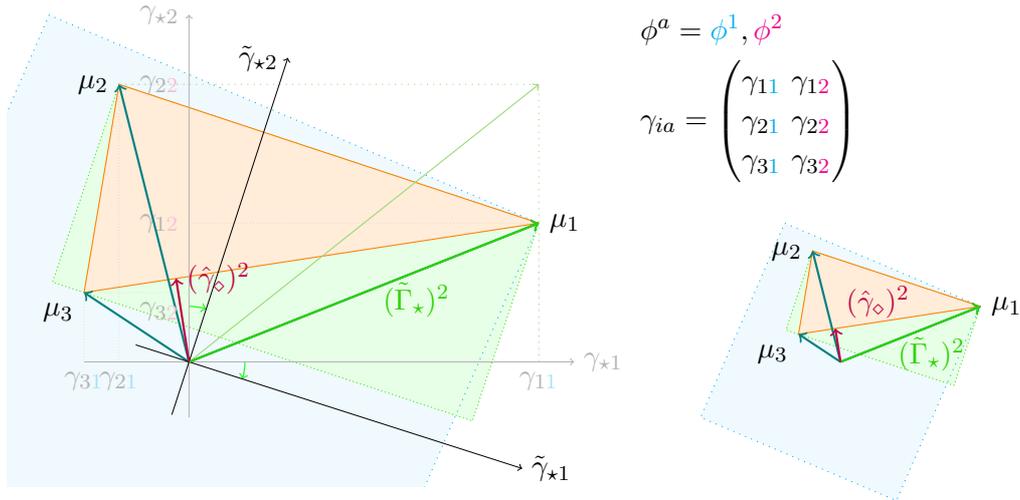

As shown in fig. \ref{fig.: optimal ekpyrosis bound 2}, there exist cases where the minimal – i.e. optimal – bound on $w$ can be computed simply as the longest coupling vector, and therefore one can consider the potentials one by one. Geometrically, this is the case if in the coupling space there is a square box symmetric about the origin that contains all the coupling vectors and such that one vertex is exactly a coupling vector.
In such cases, the optimal bound is the supremum of the distance of the origin from the points belonging to the convex hull of the exponential couplings. For example, one of the cases in which the optimal bound corresponds to the longest coupling vector is for a diagonal matrix $\smash{\gamma_{ia} = \gamma_{a} \delta_{(ia)}}$. Such potentials are studied in ref. \cite{Koyama:2007mg}, which shows that there are perturbatively-stable solutions corresponding to the single-field evolution $\smash{\phi^a(t) = \phi_0 + 2/(\kappa_d \gamma_{a}) \, \mathrm{ln} \, (t/t_0)}$; the solution which is an attractor depends on the initial conditions. In view of our bound in eq. (\ref{optimal upper w-bound for negative-definite potentials}), we conclude that $\smash{w \leq - 1 + (d-2)/[2(d-1)] \, \bigl( \max_{a} \ab \gamma_{a} \ab \bigr)^2}$, which is indeed the maximum value for $w$ that one can find according to ref. \cite{Koyama:2007mg} with the appropriate initial conditions. Our results do not involve the choice of initial conditions and apply to any negative-definite multi-field multi-exponential potential.

\subsection{Arbitrary potentials}

To generalize the results above, we consider a generic multi-field multi-exponential potential
\begin{equation} \label{generic exponential potential}
    V = \sum_{i_+ = 1}^{m_+} \Lambda_{i_+} \, \e^{- \kappa_d \gamma_{i_+ a} \phi^a} - \sum_{i_- = 1}^{m_-} K_{i_-} \, \e^{- \kappa_d \gamma_{i_- a} \phi^a},
\end{equation}
with $\smash{\Lambda_{i_+} > 0}$ and $\smash{K_{i_-} > 0}$ for all terms $\smash{i_\pm=1,\dots,m_\pm}$. Let $t_0$ be an initial time such that the Hubble parameter $\smash{H=\dot{a}/a}$ is negative, i.e. $H(t_0) < 0$, and let the initial conditions be such that $V<0$ at all times. Then, there exists a finite time $t_\star$, such that $\smash{t_0 < t_\star \leq t_0 - 1/[(d-1) H(t_0)]}$, at which the Hubble parameter diverges, i.e. $H(t_\star) = - \infty$. A proof is in appendix \ref{app: bounds on ekpyrotic universes}; see corollary \ref{corollary: g-divergence (E)}.

In this case, we are only able to prove a lower bound on the $w$-parameter. For all fields $\smash{\phi^a}$, let $\smash{\gamma_-^a \geq \Gamma_+^a}$: if $\smash{\gamma_-^a > 0}$, let $\smash{\gamma_\diamond^a = \gamma_-^a}$; else, let $\smash{\gamma_\diamond^a = 0}$. Then, if $\smash{(\gamma_\diamond)^2 \geq \Gamma_d^2}$, close to the big crunch the $w$-parameter is bounded from below as
\begin{equation} \label{lower w-bound}
    w \geq - 1 + \dfrac{1}{2} \, \dfrac{d-2}{d-1} \, (\gamma_\diamond)^2.
\end{equation}
If $\smash{\gamma_+^a \geq \Gamma_-^a}$ for a field, then one can redefine this field as $\smash{{\phi'}{}^a = -\phi^a}$ and find the same bound in terms of the flipped $\gamma$-coefficients. If for a field none of these orderings is in place, then we must simply set $\smash{\gamma_\diamond^a = 0}$ for that field. Again, if $\smash{(\gamma_\diamond)^2 < \Gamma_d^2}$ the bound reads trivially because one always has $w \geq 1$. Rigorous mathematical proofs of the bound in eq. (\ref{lower w-bound}) are in appendix \ref{app: bounds on ekpyrotic universes}; see corollary \ref{corollary: (x)^2 lower bound 2 (E)}.

\section{Phenomenology of ekpyrosis} \label{sec: phenomenology of ekpyrosis}
While our goal is not to endorse theories of the history of the universe which involve ekpyrosis, our bounds find a natural application in constraining models of ekpyrosis. Here we discuss the implications of our bounds on $w$ for phenomenological descriptions of ekpyrosis and in the search of ekpyrosis in string constructions.

\subsection{Phenomenological descriptions of ekpyrosis}
An ekpyrotic phase of the history of the universe \cite{Khoury:2001wf, Khoury:2001bz, Steinhardt:2002ih} is a proposed alternative to the theory of cosmic inflation in which a long-lived contracting phase solves the flatness problem; for a review, see e.g. refs. \cite{Quevedo:2002xw, Lehners:2008vx, Lehners:2010fy, Cicoli:2023opf, Brandenberger:2023ver}. In a contracting phase, a fluid with $w \gg 1$ would wipe out all other contributions to the energy density before a big crunch, which would then followed by the standard thermal history of the universe. A set of perfect fluids with energy densities $\rho_\alpha$ and equation-of-state parameter $w_\alpha$ gives the Friedmann equation
\begin{align*}
    H^2 = \dfrac{2 \kappa_d^2}{(d-1)(d-2)} \sum_\alpha \rho_{\alpha,0} \, \biggl( \dfrac{a_0}{a} \biggr)^{\!\! (d-1) (1 + w_\alpha)} - \dfrac{k}{a^2},
\end{align*}
where $k$ represents the possible external curvature of the FLRW-metric. In an expanding universe, an early phase of cosmic inflation wipes out the external curvature because all contributions to the energy density of the universe fall off over time with respect to the contribution of the nearly-constant inflaton potential energy ($w=-1$). In an ekpyrotic phase of a contracting universe, the solution is opposite and all energy-density contributions blow up, but less quickly than the contribution of a scalar field that is falling down a steep negative potential, with the maximal $w$ of all.\footnote{A non-flat contracting universe with anisotropies and a inhomogeneous curvature converges to a homogeneous, flat and isotropic universe if it contains energy with equation of state $w>1$: this is in essence the cosmic no-hair theorem for ekpyrosis \cite{Erickson:2003zm}.} A contracting universe also puts in causal contact points in large regions, thus solving the horizon problem. Therefore, ekpyrosis was claimed to prepare the universe in similar conditions as those after cosmic inflation.\footnote{In ref. \cite{Khoury:2001wf}, ekpyrosis is modelled through a bulk brane moving from a boundary brane to another; in ref. \cite{Khoury:2001bz}, the boundary branes themselves move, thus the embedding space contracts to zero size and provides a singularity at the moment of collision; ref. \cite{Steinhardt:2002ih} introduced the idea of a cyclic cosmology (see also e.g. ref. \cite{Ijjas:2019pyf} for recent advancements). Contracting phases of the history of the universe also appear in bouncing cosmologies; for a review, see e.g. ref. \cite{Brandenberger:2016vhg}.}

A natural candidate to realize a phase of ekpyrosis is a scalar-field theory with negative energy density.\footnote{A theory with just negative-definite multi-exponential potentials would not be a complete implementation of the ekpyrotic scenario, since the potential would end up at negative infinity. Nonetheless, for the phase when the potential falls steeply, it is legitimate to approximate the potential in terms of negative-definite terms, for which our results apply. This is the same kind of effective descriptions adopted in the literature, which discusses scaling solutions.} For instance, a simple effective description of ekpyrosis is through a single scalar field with a negative-definite exponential potential
\begin{align*}
    V = - K \, \e^{- \kappa_d \gamma \varphi},
\end{align*}
if $\smash{\gamma > 2 \sqrt{d-1}/\sqrt{d-2}}$. This potential is supposed to only model the phase of ekpyrosis, not the full potential. An exact solution reads $\smash{\varphi(t) = \varphi(t_0) + [2/ (\kappa_d \gamma)] \, \mathrm{ln} (t/t_0)}$ and $H(t) = [4/(d-2) \gamma^2] \, 1/t$, with $t < 0$. The ekpyrotic phase does not last right up to the big crunch; rather, the potential flattens and thus becomes irrelevant as the big crunch is near. Then, the attractor solution reads $\smash{\varphi(t') = \varphi(t_0') + [\sqrt{d-2}/ (\kappa_d \sqrt{d-1})] \, \mathrm{ln} (t'/t'_0)}$ and $H(t') = [1/(d-1)] \, 1/t'$, with $t' < 0$ a new shifted time variable. Note that this formally means that $V/H^2=0$. In a phase of kination with negative Hubble parameter the scalar fluid also has $w=1$, so time evolution does not bring anisotropies back in. In a consistent model of ekpyrosis, the other contributions have been washed away during the ekpyrotic phase to such an extent that they do not have time to come to dominate before right around the big crunch time \cite{Erickson:2003zm}. The big crunch in the $d$-dimensional theory is not necessarily a singularity of the higher-dimensional theory the former is embedded in; for instance, an appropriate time dependence of the scalar fields may get rid of the singularity as the scale factor approaches zero. The subsequent transition towards the standard expanding phase of the universe requires a fluid violating the null-energy condition. Such a transition is investigated in the so-called new ekpyrotic models \cite{Buchbinder:2007ad, Buchbinder:2007tw, Buchbinder:2007at}. For instance, in string-theoretic scenarios, an S-brane was argued to allow for a transition into the standard big-bang scenario \cite{Brandenberger:2020tcr, Brandenberger:2020eyf, Brandenberger:2020wha}; other proposals are in refs. \cite{Easson:2011zy, Ijjas:2016pad, Ijjas:2016tpn, Ijjas:2016vtq, Dobre:2017pnt}.

In our analysis, we are discussing the ekpyrotic phase, i.e. the phase where the potential
is negative and steep. Our optimal bounds in eqs. (\ref{optimal upper w-bound for negative-definite potentials}, \ref{optimal lower w-bound for negative-definite potentials}) provide us with an analytic measure of the couplings that would be needed in order to ensure one of the conditions for ekpyrosis, i.e. a large-enough $w$-parameter before the potential flattens and the big crunch takes place. To determine the value of $w$ that could provide a phase of ekpyrosis, one needs to compare with the other otherwise-dominating terms. For instance, matter, radiation and curvature anisotropies have $w=0, 1/(d-1),1$, respectively (see appendix \ref{app: curvature anisotropies} for the calculation of $w$ for curvature anisotropies). Therefore, a minimal requirement for ekpyrosis is to have $w > 1$. However, to make sure that all possible contributions to the Hubble parameter are sufficiently unimportant, one would have to look for $w \gg 1$. This requirement implies that, to make this possible at late-enough times, the vector ${\tilde{\Gamma}_\star^a}$ should have a length
\begin{equation} \label{ekpyrosis condition}
    (\tilde{\Gamma}_\star)^2 \gg \Gamma_d^2 = 4 \, \dfrac{(d-1)}{(d-2)}
\end{equation}
in order not to be a priori incompatible with ekpyrosis, even before considering all the other challenges related to the big crunch. This is a requirement, due to the bound in eq. (\ref{optimal upper w-bound for negative-definite potentials}), that can be tested in phenomenological descriptions of ekpyrosis. Similarly, the bound in eq. (\ref{optimal lower w-bound for negative-definite potentials}) can be used to single out potentials for which an ekpryotic phase is inevitable.

In the literature, the conditions for an ekpyrotic phase are often formulated in terms of the so-called fast-roll parameters \cite{Gratton:2003pe, Khoury:2003rt}, defined as $\smash{\tilde{\epsilon}_V = [4/(d-2)] \, \kappa_d^2 V^2/\der_a V \der^a V}$ and $\smash{\ab \tilde{\eta}_V \ab = [1/(d-2)] \, \kappa_d^2 V^2 / \der_a \der^a V}$, where $\der_a = \der / \der \phi^a$. Requirements for ekpyrosis may be expressed as the conditions $\smash{\tilde{\epsilon}_V \ll 1}$ and $\smash{\ab \tilde{\eta}_V \ab \ll 1}$. Both of these requirements are verified on the known exact scaling solutions with a single-field negative-definite exponential potential that is steep enough \cite{Khoury:2001zk}; in the $d$-dimensional case, one finds $\smash{w = -1 + 2 /[(d-1) \, \tilde{\epsilon}_V]}$. However, an analysis of tree-level type-II supergravities reveals a tension in finding small fast-roll parameters in string-theoretic constructions \cite{Uzawa:2018sal}, and we will comment more on this in ssec. \ref{ssec: ekpyrosis and string compactifications}. Our analysis does not rely on the fast-roll parameters, but rather we find constraints on the $w$-parameter by analyzing the universal properties of the time-dependent solutions to the field equations.

\subsection{Ekpyrosis and string compactifications} \label{ssec: ekpyrosis and string compactifications}
Our results can be clearly exploited in the context of the Swampland Program. Below, we list three important points and then we comment on their interplay.

\begin{enumerate}[label=\Alph*.]
    \item \label{string-theoretic expecations on ekpyrosis} In essence, the Swampland de Sitter Conjectures \cite{Obied:2018sgi, Ooguri:2018wrx} effectively constrain the slow-roll parameters of inflation in a way that abhors shallow positive potentials in the asymptotics of moduli space. However, partial evidence exists that the opposite applies to negative potentials, too: steep and negative potentials are also hardly found in explicit examples; see for instance refs. \cite{Uzawa:2018sal, Lehners:2018vgi}. As the latter could in principle realize ekyprosis, and given that ekpyrosis was claimed to lead to similar physics as inflation, it would seem conceivable to speculate that string-theoretic negative potentials may be too shallow to accommodate a slow contraction of the universe.\footnote{We are indebted to Thomas Van Riet for bringing these observations to our attention. These are also discussed in ref. \cite{meeus:mathesis}.} Of course, the classification of a potential as shallow or steep is a delicate one. A clearer way to phrase the above is that string theory gives order-1 parameters, which may turn out to be too large for inflation but also too small for ekpyrosis.
    \item \label{SFRC} In view of the Swampland Distance Conjecture \cite{Ooguri:2006in}, the covariant entropy bound \cite{Bousso:1999xy} on the towers of states becoming lighter and lighter on the way towards the moduli-space asymptotics supports the Swampland Fast-Roll Conjecture of ref. \cite{Bernardo:2021wnv}. This is the statement that the potential $V$ of an effective field theory that descends consistently from quantum gravity should satisfy the inequality
    \begin{equation} \label{fast-roll conjecture}
        - \dfrac{\der V}{\kappa_d V} \geq c
    \end{equation}
    in regions where the potential is negative, i.e. $V<0$, where $c > 0$ is an order-1 positive numerical constant and $\der V$ is the norm of the scalar potential gradient on the moduli space. At field values where eq. (\ref{fast-roll conjecture}) does not hold, such as at an anti-de Sitter minimum, ref. \cite{Bernardo:2021wnv} further proposes the condition $\smash{\der^2 V \geq c' \kappa_d^2 V}$ to be in place, where $c' > 0$ is another order-1 positive numerical constant and $\der^2 V$ is the maximum of the Hessian eigenvalues of the potential. Remarkably, the conjecture of eq. (\ref{fast-roll conjecture}) is qualitatively compatible with the ekpyrotic scenario, which requires steep negative potentials. It is further argued in ref. \cite{Bernardo:2021wnv} that this conjecture is compatible with the Transplanckian Censorship Conjecture \cite{Bedroya:2019snp}.
    \item \label{SATCC} Concerning negative potentials, by studying the asymptotic regime, too, ref. \cite{Andriot:2022brg} argues that a condition for a contracting universe to be in the string landscape is $\smash{a(t)/a(t_i) \geq \kappa_d^{\frac{d}{d-2}} \sqrt{-V(t_i)}}$, where $t_i$ is an initial time before $t$. This represents the Swampland Anti-Transplanckian Censorship Conjecture. For instance, in a parameterization of time with $t_i < t < 0$, scaling cosmologies with a single negative potential are such that $\smash{a(t)/a(t_i) = (t/t_i)^{4/[(d-2) \gamma^2]}}$. Therefore, if true, this condition effectively prevents the big crunch from happening, posing a lower bound on the final scale factor. Along with another condition on the second derivatives of the potential, the conjecture can be argued to pose the bound in eq. (\ref{fast-roll conjecture}) with $\smash{c = 2/(\sqrt{d-1} \sqrt{d-2})}$.
\end{enumerate}

Therefore, there seems to be a conflict between the general properties that one may expect of string compactifications and the Swampland Fast-Roll and Anti-Transplanckian Censorship Conjectures. An important point is however the following: our analysis regards the time-dependent solutions to the field equations, not the shape of the scalar potential per se (the shape of the potential is an off-shell property). In particular, while the string-theoretic expectation \ref{string-theoretic expecations on ekpyrosis} that we discussed may be seen as a property of time-dependent solutions, the two conjectures \ref{SFRC} and \ref{SATCC} are referred to the shape of the potential in the moduli-space asymptotics. As FLRW-cosmology with a negative potential implies a big crunch after evolution for a finite time, there may not be enough time for the fields to evolve towards asymptotic regions.\footnote{The potential can be written as a sum of exponentials if the fields are sufficiently close to the moduli-space boundary. However, our analysis assumes the effective field theory to be valid. Hence, it applies if initial conditions are in a region of moduli space where e.g. the towers of states that are expected to become light in view of the Distance Conjecture are still heavy. As the big crunch comes in a finite time, the field excursion may be small enough that such towers do not get light enough to affect the effective field theory.} Below we explore this further.

Because up until slightly before the big crunch the $w$-parameter is bounded both from below and from above in view of eqs. (\ref{optimal upper w-bound for negative-definite potentials}, \ref{optimal lower w-bound for negative-definite potentials}), we are in a strong position to constrain the feasibility of ekpyrosis in analytic terms, in a given theory. Our bounds provide us with an analytic handle to quantitatively assess whether a certain model has a steep-enough potential to allow for ekpyrosis. In particular, we can draw general conclusions on whether string compactifications provide such a kind of potentials and discuss the aspects that may imply that ekpyrosis is in the Swampland.

To start, we can make two very generic observations. On the one hand, it is apparent that more rolling fields tend to increase the $w$-parameter, by the lower bound in eq. (\ref{optimal lower w-bound for negative-definite potentials}). In fact, physically there are more contributions to the kinetic energy and mathematically the coupling vectors are higher dimensional, thus having a greater length. On the other hand, it is clear from eq. (\ref{optimal upper w-bound for negative-definite potentials}) that there is also an upper bound that, numerically, is not expected to be extremely large in string compactifications, as it is made up of sums of supposedly order-1 parameters, unless at least one of these parameters is not order-1 or there are parametrically-many fields. More in detail, all potentials generated perturbatively in string compactifications feature a canonical $d$-dimensional dilaton $\tdelta$ with an exponential coupling of a definite sign, i.e. \cite{Shiu:2023nph, Shiu:2023fhb}
\begin{equation}
    \gamma_{\tdelta} = - \dfrac{d}{\sqrt{d-2}} + \chi_{\mathrm{E}} \dfrac{\sqrt{d-2}}{2},
\end{equation}
where $\chi_{\mathrm{E}} \leq 2$ is the Euler number that weighs the string-coupling perturbative order via the string-worldsheet topology.\footnote{With negative potentials generated perturbatively, a rolling $d$-dimensional dilaton $\tdelta$ typically drives the theory towards strong coupling and/or small Einstein-frame volume. It turns out that $\tdelta$ is self-T-dual, while the Einstein-frame radion $\tomega$ is self-S-dual. Nevertheless, for an actual implementation of ekpyrosis, at some point before the big crunch a positive potential must kick in and prevent the fall of the scalars to negative infinity. This should also be implemented in a string-theoretic realization.} Therefore, at tree-level, the maximal dilaton coupling one has is $\smash{\ab \gamma_{\tdelta} \ab = 2 /\sqrt{d-2}}$,\footnote{Since the sources of RR-fields are D-branes and not fundamental strings \cite{Polchinski:1995mt}, in this classification RR-fluxes weigh like one-loop terms, with $\smash{\chi_{\mathrm{E}}=0}$. However, flux potentials are positive definite.} which by itself would not accommodate for the ekpyrosis condition of eq. (\ref{ekpyrosis condition}). Even one-loop Casimir energies, with $\smash{\ab \gamma_{\tdelta} \ab = d /\sqrt{d-2}}$, are still not steep enough to easily accommodate for the ekpyrosis condition in eq. (\ref{ekpyrosis condition}). In principle, a potential generated quantum-mechanically only at very high loop order and with all potential terms negative could realize ekpyrosis due to a large contribution coming from the dilaton. However, it is hard to imagine a model where this may be the case.

In view of our analysis, there is no conclusive evidence in favour or against the Swampland Fast-Roll and Anti-Transplanckian Censorship Conjectures. However, we have observed that it seems possible to circumvent an hypothesis that underlies the argument in favour of the conjecture proposed by refs. \cite{Bernardo:2021wnv, Andriot:2022brg}, namely the fact that the evolution is in the moduli-space asymptotics. It generally seems hard to realize an ekpyrotic universe in the perturbative regime of string theory: the ubiquitous dilaton coupling is not large enough, at least up to one-loop terms. Therefore, one would need to have several other fields rolling down the potential and then have a mechanism to prevent all of them from eventually driving the total potential to negative infinity, and/or fields with negative potentials and very large couplings generated perturbatively. In fact, we can embrace a constructive attitude and exploit our analytic bounds to constrain the presence of ekpyrosis in the string landscape.

\subsection{Examples}

A trivial example we can study is that of a string compactification on a space of positive curvature. In this case, the scalar potential reads \cite{Shiu:2023fhb}
\begin{align*}
    V = - K_{R} \, \e^{\frac{2}{\sqrt{d-2}} \, \kappa_d \tdelta - \frac{2}{\sqrt{10-d}} \, \kappa_d \tsigma},
\end{align*}
where $K_{R} > 0$ is a positive constant proportional to the curvature of the $(10-d)$-dimensional compactification space $\smash{\mathrm{K}_{10-d}}$, with $2 < d < 9$. Here, $\tdelta$ and $\tsigma$ are the canonically-normalized $d$-dimensional dilaton and string-frame volume, respectively. In this case, the $\smash{\Gamma_\star^a}$-vector is very short, i.e. $\smash{({\Gamma_{\star})^2} = [4/(d-2)] [8/(10-d)] < \Gamma_d^2}$ for $2<d<9$. Therefore, after enough time, the state-parameter will approach the kination value $w=1$, according to the bound in eq. (\ref{upper w-bound for negative-definite potentials}). In particular, the attractor solution has
\begin{align*}
    \dot{\tdelta}^2(t) + \dot{\tsigma}^2(t) & = \dfrac{1}{\kappa_d^2 t^2} \dfrac{d-2}{d-1}, \\
    H(t) & = \dfrac{1}{\displaystyle (d-1) \, t},
\end{align*}
with the cosmological time parameterized to be $t \leq 0$. As is clear from the shape of the potential, the fields evolve towards the values $\smash{\tdelta \sim \infty}$ and $\smash{\tsigma \sim -\infty}$, i.e. towards divergent string coupling (i.e. $g_s \sim + \infty$) and vanishing Einstein-frame volume ($\smash{\tilde{\mathrm{vol}} \, \mathrm{K}_{10-d} \sim 0^+}$). Therefore, the effective-field theory description breaks down at some point in the evolution towards the attractor solution.

Another trivial example we can formally study is that of an O$p$-plane wrapping a $(p+1-d)$-dimensional cycle inside the compactification manifold, with $p+1 \geq d$. In this case, the potential reads \cite{Shiu:2023fhb}
\begin{align*}
    V = - K_{\mathrm{O}p} \, \e^{\frac{d+2}{2 \sqrt{d-2}} \, \kappa_d \tdelta - \frac{d+8-2p}{2 \sqrt{10-d}} \kappa_d \tsigma}.
\end{align*}
which is also unsuitable for ekpyrosis, being $\smash{(\Gamma_\star)^2 = \Gamma_d^2 - (9-p)(p+1-d)/(10-d) < \Gamma_d^2}$. Therefore, the attractor has $w=1$.

A trivial example with terms of both signs that we can study is that of a string compactification on a space of positive curvature, together with RR-$q$-form flux, with $q \leq 10-d$. In this case, the scalar potential reads \cite{Shiu:2023fhb}
\begin{align*}
    V = & - K_{R} \, \e^{\frac{2}{\sqrt{d-2}} \, \kappa_d \tdelta - \frac{2}{\sqrt{10-d}} \, \kappa_d \tsigma} + \Lambda_{F_q} \, \e^{\frac{d}{\sqrt{d-2}} \, \kappa_d \tdelta - \frac{(10-d) - 2q}{\sqrt{10-d}} \, \kappa_d \tsigma},
\end{align*}
where $\smash{\Lambda_{F_q} > 0}$ is a positive constant proportional to the squared number of flux units. In the $\tdelta$-direction, because $\smash{\gamma_{-}^{\tdelta}=\Gamma_{-}^{\tdelta} = -2/\sqrt{d-2}}$ and $\smash{\gamma_{+}^{\tdelta}=\Gamma_{+}^{\tdelta} = -d/\sqrt{d-2}}$, we have $\smash{\gamma_\diamond^{\tdelta} = 0}$: this is because the exponential coefficients of the dilaton have the same sign, and so, because the potentials are of opposite signs, the dilaton sees a potential valley. In the $\smash{\tsigma}$-direction, as $\smash{\gamma_{-}^{\tsigma}=\Gamma_{-}^{\tsigma} = 2/\sqrt{10-d}}$ and $\smash{\gamma_{+}^{\tsigma}=\Gamma_{+}^{\tsigma} = [(10-d)-2q]/\sqrt{10-d}}$, we have $\smash{\gamma_\diamond^{\tsigma} = 2/\sqrt{10-d}}$ if $q \geq (8-d)/2$, or $\smash{\gamma_\diamond^{\tsigma} = 0}$ otherwise. In this case, the bound in eq. (\ref{lower w-bound}) is trivial, since $\smash{(\gamma_\diamond)^2 \leq \Gamma_d^2}$. No basis rotation can convincingly support ekpyrosis since there are only two fields and all couplings are around order-1 values. One would need knowledge of an upper bound, in order to possibly rule out ekpyrosis. However, the fact that none of the couplings is parametrically large provides a heuristic motivation to believe that the potential above cannot possibly give a very large parameter $w \gg 1$.

\section{Conclusions} \label{sec: conclusions}

In this article, we derive and discuss exact upper and lower bounds for the $w$-parameter of cosmological fluids consisting of canonical scalars with arbitrary negative multi-exponential potentials. To do so, we do not need to make any approximation on the field equations, we do not need to find an explicit solution to such equations, and we do not make reference to the initial conditions. Yet, the bounds apply to all possible fully-fledged time-dependent solutions of such equations. All these features are in analogy with the late-time bounds on cosmic acceleration of ref. \cite{Shiu:2023nph}.

The bounds we find are useful for assessing whether a given negative multi-field multi-exponential potential could hold sufficient properties for ekpyrosis. One can make use of these bounds both in phenomenological descriptions of ekpyrosis, where the couplings are arbitrary, and in the scan of the string landscape, testing the implications of the model-independent bounds on the couplings dictated by the higher-dimensional structure of a string compactification.

In the formulation of our bounds, we elucidated the geometric understanding of the limiting behaviors of the parameter $w$ that emerges from the scalar potential. This also provides one with a clear handle on the features that a theory must have in order to be compatible with ekpyrosis. Qualitatively, we observed an apparent mild tension of string compactifications with ekpyrosis. Working with canonical scalars, we can single out general patterns that would allow for ekpyrosis, such as requiring negative energies from high-order loop contributions, a large number of rolling scalar fields, and/or being away from the perturbative regime of string theory.

\acknowledgments
We are extremely grateful to Thomas Van Riet for collaborating in the initial stages of this project and for providing us with inspiring observations on ekpyrosis. We would also like to thank David Andriot, Caroline Jonas and George Tringas for several valuable discussions. GS is supported in part by the DOE grant DE-SC0017647. FT is supported by the FWO Odysseus grant GCD-D0989-G0F9516N. HVT is supported in part by the NSF CAREER grant DMS-1843320 and a Vilas Faculty Early-Career Investigator Award.

\appendix

\section{Bounds on ekpyrotic universes} \label{app: bounds on ekpyrotic universes}

Let $\smash{V = \sum_{i = 1}^m \Lambda_i \, \e^{- \kappa_d \gamma_{i a} \phi^a} = - \sum_{i = 1}^m K_i \, \e^{- \kappa_d \gamma_{i a} \phi^a}}$, with $K_i > 0$ for all terms $i=1,\dots,m$, be the scalar potential for the canonically-normalized scalar fields $\phi^a$, with $a=1,\dots,n$, in a $d$-dimensional FLRW-background: one can reduce the cosmological scalar-field and Friedmann equations to a system of autonomous equations \cite{Halliwell:1986ja, Copeland:1997et, Coley:1999mj, Guo:2003eu}. In terms of the variables
\begin{align*}
    x^a & = \dfrac{\kappa_d}{\sqrt{d-1} \sqrt{d-2}} \, \dfrac{\dot{\phi}^a}{H}, \\
    y^i & = \dfrac{\kappa_d \sqrt{2}}{\sqrt{d-1} \sqrt{d-2}} \, \dfrac{1}{H} \, \sqrt{\Lambda_i \, \e^{- \kappa_d \gamma_{i a} \phi^a}},
\end{align*}
and defining for simplicity
\begin{align*}
    f & = (d-1) H, \\
    c_{i a} & = \dfrac{1}{2} 
    \dfrac{\sqrt{d-2}}{\sqrt{d-1}} \, \gamma_{ia},
\end{align*}
the cosmological equations can indeed be expressed as
\begin{subequations}
\begin{align}
    \dot{x}^a & = \biggl[ -x^a (y)^2 + \sum_{i=1}^m {c_i}^{a} (y^i)^2 \biggr] \, f, \label{x-equation} \\
    \dot{y}^i & = \bigl[ (x)^2 - c_{ia} x^a \bigr] \, y^i f, \label{y-equation}
\end{align}
\end{subequations}
jointly with the two conditions
\begin{subequations}
\begin{align}
    & \dfrac{\dot{f}}{f^2} = - (x)^2, \label{f-equation} \\
    & (x)^2 + (y)^2 = 1. \label{sphere-condition}
\end{align}
\end{subequations}
(Here, the position of the $i$- and $a$-indices is arbitrary, since the former are just dummy labels and the latter refer to a field-space metric that is a Kronecker delta; if they appear together, such as in the coefficients $\smash{c_{ia}}$ and $\smash{{c_i}^a}$, we always write them on the left and on the right, respectively, to avoid confusion, and we place the $a$-index in the same (upper or lower) position as the other $x^a$-types variable that appear in the expression of reference. Einstein summation convention is understood for the metric-contraction on the $a$-indices and moreover we use the shorthand notations $\smash{(x)^2 = x_a x^a}$ and $\smash{(y)^2 = \sum_{i=1}^m (y^i)^2}$.)

By defining $(z^i)^2 = - (y^i)^2$ and $g = -f$, one obtains the dynamical system
\begin{subequations}
\begin{align}
    \dot{x}^a & = \biggl[ -x^a (z)^2 + \sum_{i=1}^m {c_i}^{a} (z^i)^2 \biggr] \, g, \label{x-equation (E)} \\
    \dot{z}^i & = - \bigl[ (x)^2 - c_{ia} x^a \bigr] \, z^i g, \label{z-equation (E)}
\end{align}
\end{subequations}
jointly with the two conditions
\begin{subequations}
\begin{align}
    & \dfrac{\dot{g}}{g^2} = + (x)^2, \label{g-equation (E)} \\
    & (x)^2 - (z)^2 = 1. \label{hyperbola-condition (E)}
\end{align}
\end{subequations}

Below are a series of mathematical results. A formulation in terms of physical observables and an analysis of their implications is in the main text. The discussion shares similarities with the derivation of late-time bounds on $(x)^2$ in the presence of a positive $(y^2)$-term performed in ref. \cite{Shiu:2023nph}, but it corresponds to a completely different physical and mathematical scenario. \\

Let the unknown functions be such that $x^a \in  \, \mathbb{R}$ and $z^i \in \, \mathbb{R}$; let $c^a$ and $C^a$ denote the minimum and the maximum of the constant parameters $\smash{{c_i}^a}$ for each $a$-index, i.e. $\smash{c^a = \min_i {c_i}^a}$ and $\smash{C^a = \max_i {c_i}^a}$; let $C_\star^a = \max \lbrace |c^a|, |C^a| \rbrace > 0$. Let $t_0$ be the initial time and let $g(t_0) > 0$. Let $\smash{\varphi(t) = \int_{t_0}^t \de s \; g(s) \bigl[ z(s) \bigr]^2}$.

\vspace{8pt}
\begin{lemma} \label{lemma: g - growing function (E)}
    At all times $t > t_0$ one has
    \begin{equation} \label{eq.: g - growing function (E)}
        g(t) \geq g(t_0) > 0.
    \end{equation}
\end{lemma}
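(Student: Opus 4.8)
The plan is to read off the sign of $\dot g$ directly from the constraint equations and conclude monotonicity. Rewriting eq.~(\ref{g-equation (E)}) as $\dot g = g^2 (x)^2$, the right-hand side is manifestly non-negative: indeed $(x)^2 = x_a x^a$ is a sum of squares, and in fact the hyperbola constraint~(\ref{hyperbola-condition (E)}) gives the stronger statement $(x)^2 = 1 + (z)^2 \geq 1$. Hence, on the maximal forward interval of existence of the solution emanating from $t_0$, the function $g$ is non-decreasing.

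Next I would combine this monotonicity with the initial datum. Since $g(t_0) > 0$ and $\dot g \geq 0$ for $t \geq t_0$, integration gives
\begin{equation*}
    g(t) = g(t_0) + \int_{t_0}^t \de s \; g^2(s) \, [x(s)]^2 \geq g(t_0) > 0
\end{equation*}
for every $t > t_0$ in the interval of existence. In particular $g$ never vanishes for $t \geq t_0$, so that the division by $g^2$ in eq.~(\ref{g-equation (E)}) is legitimate throughout and the argument is self-consistent; this also guarantees $z^i$ and $x^a$ remain real and the system~(\ref{x-equation (E)},~\ref{z-equation (E)}) stays well-posed.

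The only point requiring care is the domain of validity: the sharpened inequality $\dot g \geq g^2$ shows that $g$ will in general diverge after a finite time (this is precisely the big-crunch divergence quantified in corollary~\ref{corollary: g-divergence (E)}), so the phrase ``at all times $t > t_0$'' is to be understood as ``at all $t > t_0$ within the maximal forward interval of existence''. There is no genuine obstacle here — the claim is an immediate consequence of the non-negativity of $(x)^2$ — and the finite-time-blow-up subtlety is handled separately when one estimates the crunch time $t_\star$.
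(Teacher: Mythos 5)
Your argument is correct and is essentially the paper's own proof: both read the sign of $\dot g$ directly from eq.~(\ref{g-equation (E)}) together with $(x)^2 \geq 0$ (indeed $(x)^2 \geq 1$ by the constraint) and conclude that $g$ is non-decreasing, hence $g(t) \geq g(t_0) > 0$. Your additional remarks on the maximal interval of existence and on $g$ never vanishing (so the division by $g^2$ is legitimate) are sensible refinements of the same one-line argument, not a different route.
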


\begin{proof}
In view of eq. (\ref{g-equation (E)}), one has $\smash{\dot{g}(t) \geq 0}$, which means that $g(t)$ is never decreasing.
\end{proof}

\vspace{8pt}
\begin{lemma} \label{lemma: g - lower bound (E)}
    At all times $t > t_0$ one has
    \begin{equation} \label{eq.: g - lower bound (E)}
        g(t) \geq \dfrac{1}{- (t - t_0) + \dfrac{1}{g(t_0)}}.
    \end{equation}
\end{lemma}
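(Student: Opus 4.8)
The plan is to convert the algebraic constraint \eqref{hyperbola-condition (E)} together with the evolution law \eqref{g-equation (E)} into a single differential inequality for $g$ and then integrate it, in the spirit of a Riccati-type comparison argument.

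First I would exploit the hyperbola condition: since $(z)^2 = \sum_{i=1}^m (z^i)^2 \geq 0$, the identity $(x)^2 - (z)^2 = 1$ forces $(x)^2 \geq 1$ at all times. Substituting this into \eqref{g-equation (E)} gives $\dot g / g^2 = (x)^2 \geq 1$, and hence $\dot g \geq g^2$ on the whole interval of existence.

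Second, I would rewrite this in terms of $1/g$. By Lemma \ref{lemma: g - growing function (E)} we have $g(t) \geq g(t_0) > 0$, so $1/g$ is well-defined and differentiable with $\tfrac{\de}{\de t}(1/g) = - \dot g / g^2 \leq -1$. Integrating from $t_0$ to $t$ yields $1/g(t) - 1/g(t_0) \leq -(t - t_0)$, i.e.\ $1/g(t) \leq -(t - t_0) + 1/g(t_0)$.

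Finally, I would invert this inequality. On the interval $t_0 < t < t_0 + 1/g(t_0)$ the right-hand side is strictly positive, and so is the left-hand side because $g > 0$; taking reciprocals therefore reverses the inequality and produces exactly \eqref{eq.: g - lower bound (E)}. For $t \geq t_0 + 1/g(t_0)$ the right-hand side of \eqref{eq.: g - lower bound (E)} is non-positive, so the bound holds trivially since $g(t) > 0$. I do not expect a genuine obstacle here; the only point requiring care is the sign bookkeeping when passing from a bound on $1/g$ to a bound on $g$, which is precisely why Lemma \ref{lemma: g - growing function (E)} must be invoked to keep $g$ strictly positive and the reciprocal legitimate. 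As a bonus, the blow-up of the right-hand side of \eqref{eq.: g - lower bound (E)} as $t \to (t_0 + 1/g(t_0))^-$ is exactly what will later force the finite-time divergence of $g$ (equivalently of $|H|$) recorded in the subsequent corollary.
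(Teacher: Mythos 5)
Your proposal is correct and follows essentially the same route as the paper: use the constraint $(\X)^2-(\Z)^2=1$ to get $(x)^2\geq 1$, feed this into $\dot g/g^2=(x)^2$ to obtain a differential inequality for $1/g$, integrate, and invert using the positivity of $g$ guaranteed by Lemma \ref{lemma: g - growing function (E)}. Your sign bookkeeping, giving $1/g(t)\leq -(t-t_0)+1/g(t_0)$ before taking reciprocals, is in fact the correct direction of the intermediate inequality, and your remark about the trivial regime $t\geq t_0+1/g(t_0)$ is a harmless addition (the solution in fact cannot persist past that point, as the subsequent corollary records).
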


\begin{proof}
In view of eqs. (\ref{g-equation (E)}, \ref{hyperbola-condition (E)}), since $\bigl[ x(t) \bigr]^2 \geq 1$ at all times $t$, one has $\smash{- \de \bigl[ 1/g(t) \bigr] / \de t \geq 1}$, which can be integrated to give $\smash{1/g(t) \geq -(t - t_0) + 1/g(t_0)}$. Because $g(t) > 0$ by eq. (\ref{eq.: g - growing function (E)}), one immediately gets eq. (\ref{eq.: g - lower bound (E)}).
\end{proof}

\vspace{8pt}
\begin{corollary} \label{corollary: g-divergence (E)}
    As $g(t_0) > 0$, there exists a finite time $t_\star$, belonging in the interval $\smash{t_\star \in \, ]t_0, t_0 + 1/g(t_0)]}$, at which the function $g(t)$ diverges, i.e.
    \begin{equation} \label{eq.: g-divergence (E)}
        \lim_{t \to t_\star} g(t) = + \infty.
    \end{equation}
\end{corollary}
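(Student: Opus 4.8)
The plan is to extract the divergence directly from the explicit lower bound of Lemma \ref{lemma: g - lower bound (E)} by passing to the reciprocal function. Set $h(t) = 1/g(t)$, which is well-defined and positive by Lemma \ref{lemma: g - growing function (E)}. Eqs. (\ref{g-equation (E)}, \ref{hyperbola-condition (E)}) give $\dot h = -\dot g/g^2 = -(x)^2 \leq -1$, since the hyperbola constraint forces $(x)^2 = 1 + (z)^2 \geq 1$. Integrating from $t_0$ yields $0 < h(t) \leq h(t_0) - (t-t_0) = 1/g(t_0) - (t-t_0)$, which is precisely eq. (\ref{eq.: g - lower bound (E)}) rewritten. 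The right-hand side is positive at $t_0$ and vanishes at $t = t_0 + 1/g(t_0)$, so $h$ — being positive wherever the solution exists — cannot be continued up to (let alone past) $t_0 + 1/g(t_0)$.

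Next I would let $[t_0, t_\star)$ denote the maximal interval on which the solution to eqs. (\ref{x-equation (E)}--\ref{hyperbola-condition (E)}) exists. By the previous paragraph $t_\star \leq t_0 + 1/g(t_0)$, while $t_\star > t_0$ because $h$ is continuous with $h(t_0) = 1/g(t_0) > 0$. On $[t_0, t_\star)$ the function $h$ is continuous, strictly decreasing (as $\dot h \leq -1 < 0$) and positive, hence has a limit $h(t_\star^-) \in [0, h(t_0))$. I claim this limit is $0$. If instead $h(t_\star^-) = \ell > 0$, then $g = 1/h$ stays bounded on $[t_0, t_\star)$, and one checks that $(x,z)$ must then also stay bounded on a left-neighbourhood of $t_\star$, so the solution extends past $t_\star$, contradicting maximality. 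Therefore $h(t_\star^-) = 0$, i.e. $\lim_{t\to t_\star} g(t) = +\infty$, which is eq. (\ref{eq.: g-divergence (E)}), with $t_\star \in \, ]t_0,\, t_0 + 1/g(t_0)]$ as claimed.

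The one delicate point is the claim that, on an interval where $g$ is bounded, $(x,z)$ cannot blow up — i.e. that $t_\star$ can only be a singularity of $g$. Using the constraint $(z)^2 = (x)^2 - 1$, this reduces to bounding $(x)^2$; from eq. (\ref{x-equation (E)}), $\tfrac{\de}{\de t}(x)^2 = 2\bigl[-(x)^2(z)^2 + \sum_i c_{ia}x^a (z^i)^2\bigr] g$, and since $|c_{ia}x^a|$ is controlled by $\sqrt{(x)^2}$ with a constant depending only on $\max_{i,a}|c_{ia}|$, the growth of $(x)^2$ is at most polynomial with coefficient governed by $\sup g$, so it cannot escape to infinity in finite time while $g$ stays bounded. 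Alternatively — and more in keeping with the elementary tone of this appendix — one can bypass the maximal-interval machinery entirely: if the solution existed on all of $[t_0,\, t_0 + 1/g(t_0)]$ with $g$ finite at the endpoint, then letting $t \uparrow t_0 + 1/g(t_0)$ in eq. (\ref{eq.: g - lower bound (E)}) would force $g$ to be unbounded just before the endpoint, contradicting finiteness there together with monotonicity (Lemma \ref{lemma: g - growing function (E)}); hence a finite divergence time $t_\star \leq t_0 + 1/g(t_0)$ must exist, and the bound $t_\star > t_0$ is immediate. I expect this second route to be the cleanest to write out in full.
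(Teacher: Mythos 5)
Your second, ``elementary'' route is precisely the paper's proof: the corollary is deduced directly from the divergent lower bound of Lemma \ref{lemma: g - lower bound (E)} together with the monotonicity of Lemma \ref{lemma: g - growing function (E)}, so the proposal is correct and takes essentially the same approach. One aside: in your first route the step ``growth of $(x)^2$ is at most polynomial, so it cannot escape to infinity in finite time while $g$ stays bounded'' is not sound as stated (superlinear bounds such as $\dot u \le K u^{3/2}$ do allow finite-time blow-up); the clean fix is to retain the negative term $-(x)^2(z)^2$, which dominates and keeps $(x)^2$ bounded while $g$ is bounded — but since you ultimately set that route aside, this does not affect the result.
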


\begin{proof}
    In view of eq. (\ref{eq.: g - lower bound (E)}), the function $g(t)$ is bounded from below by a function that diverges at $t = t_0 + 1/g(t_0)$. Hence, eq. (\ref{eq.: g-divergence (E)}) follows immediately.
\end{proof}

\vspace{12pt}

In view of corollary \ref{corollary: g-divergence (E)}, it is convenient to perform a change of variables to avoid dealing with a divergent function. Let $u: \; \mathbb{R}_0^+ \, \to \, [t_0, t_\star]$ be the function such that
\begin{equation}
    u'(\tau) = \dfrac{\de u}{\de \tau} (\tau) = \dfrac{1}{g [u(\tau)]},
\end{equation}
together with the initial condition $u(0) = t_0$. A function $\mathsf{h}(\tau) = h [u(\tau)]$ is such that the derivative with respect to $\tau$ reads $\smash{\mathsf{h}'(\tau) = u'(\tau) \cdot \de h[u(\tau)]/\de u = \bigl( 1 / g[u(\tau)] \bigr) \cdot \de h[u(\tau)]/\de u}$. For the variables (notice the different notation between $\smash{(x^a,z^i)}$ and $\smash{(\X^a,\Z^i)}$ from now on)
\begin{subequations}
\begin{align}
    \X^a(\tau) & = x^a [u(\tau)], \\
    \Z^i(\tau) & = z^i [u(\tau)],
\end{align}
\end{subequations}
the cosmological autonomous system in eqs. (\ref{x-equation (E)}, \ref{z-equation (E)}) takes the form
\begin{subequations}
\begin{align}
    \X'{}^a & = -\X^a (\Z)^2 + \sum_{i=1}^m {c_i}^{a} (\Z^i)^2, \label{X-equation (E)} \\
    \Z'{}^i & =- \bigl[(\X)^2 - c_{ia} \X^a \bigr] \, \Z^i, \label{Z-equation (E)}
\end{align}
\end{subequations}
with the constraint in eq. (\ref{hyperbola-condition (E)}) being
\begin{equation} \label{(X,Y)-plane hyperobola-condition (E)}
    (\X)^2 - (\Z)^2 = 1.
\end{equation}
Let $\smash{G(t) = \int_{t_0}^t \de s \, g(s)}$. Then, one has $\smash{\de G[u(\tau)] / \de \tau = 1}$, which means that $\smash{G[u(\tau)] = \tau}$. Because then $\smash{u(\tau) = G^{-1} (\tau)}$, the function $u=u(\tau)$ is indeed a function $u: \; [t_0, \infty[ \, \to \, [t_0, t_\star]$. In particular, it is such that
\begin{align*}
    \lim_{\tau \to \infty} u(\tau) = t_\star.
\end{align*}
As integrations give $\smash{\int_{\tau_1}^{\tau_2} \de \tau \, \mathsf{h}(\tau) = \int_{u(\tau_1)}^{u(\tau_2)} \de s \, g(s) \, h(s)}$, one has $\varphi(t) =  \int_{t_0}^t \de s \; g(s) \bigl[ z(s) \bigr]^2 = \int_{0}^{u^{-1}(t)} \de \tau \, [\Z(\tau)]^2$. One can also write $\smash{\textnormal{\textphi}(\tau) = \varphi[u(\tau)] = \int_{0}^{\tau} \de \sigma \, \bigl[ \Z(\sigma) \bigr]^2}$.

\vspace{12pt}

\begin{lemma} \label{lemma: X - upper bound (E)}
    If $\smash{\lim_{\tau \to \infty} \textnormal{\textphi}(\tau) = \infty}$, then one has
    \begin{equation} \label{eq.: X - upper bound (E)}
        \limsup_{\tau \to \infty} \X^a(\tau) \leq C^a.
    \end{equation}
\end{lemma}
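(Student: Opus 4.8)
The plan is to reduce the statement to a single scalar comparison estimate. Fix an index $a$. Since ${c_i}^a \le C^a$ for every $i=1,\dots,m$ and each $(\Z^i)^2 \ge 0$, the $\X$-equation (\ref{X-equation (E)}) immediately yields the differential inequality
\begin{equation*}
    \X'{}^a = - \X^a (\Z)^2 + \sum_{i=1}^m {c_i}^a (\Z^i)^2 \le \bigl( C^a - \X^a \bigr) (\Z)^2 ,
\end{equation*}
and this is the only information about the dynamics I shall use; the constraint (\ref{(X,Y)-plane hyperobola-condition (E)}) plays no role in this argument.

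Next I would show that the half-line $\{ \X^a \le C^a \}$ is forward-invariant. Indeed, at any instant where $\X^a = C^a$ the inequality above forces $\X'{}^a \le 0$, so $\X^a$ cannot cross the level $C^a$ upwards: if $\X^a(\tau_1) \le C^a$ for some $\tau_1 \ge 0$ but $\X^a(\tau_2) > C^a$ for some $\tau_2 > \tau_1$, then taking $\tau_3 \in [\tau_1, \tau_2)$ to be the last time at which $\X^a = C^a$, the function $v := \X^a - C^a$ is positive on $(\tau_3, \tau_2]$ and there satisfies $v' \le - v (\Z)^2 \le 0$, so that $v(\tau_2) \le v(\tau_3) = 0$, a contradiction. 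Hence, if $\X^a(\tau_1) \le C^a$ for some $\tau_1$, then $\X^a(\tau) \le C^a$ for all $\tau \ge \tau_1$ and $\limsup_{\tau \to \infty} \X^a(\tau) \le C^a$ follows trivially.

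It remains to treat the complementary case, in which $\X^a(\tau) > C^a$ for all $\tau \ge 0$. Then $v(\tau) = \X^a(\tau) - C^a$ is strictly positive and satisfies $(\ln v)'(\tau) \le - [\Z(\tau)]^2$; integrating from $0$ to $\tau$ and recalling that $\textnormal{\textphi}(\tau) = \int_0^\tau \de \sigma \, [\Z(\sigma)]^2$, one obtains
\begin{equation*}
    0 < v(\tau) \le v(0) \, \e^{- \textnormal{\textphi}(\tau)} .
\end{equation*}
By hypothesis $\textnormal{\textphi}(\tau) \to \infty$ as $\tau \to \infty$, so the right-hand side tends to $0$, hence $v(\tau) \to 0^+$, i.e. $\X^a(\tau) \to C^a$; in particular $\limsup_{\tau \to \infty} \X^a(\tau) \le C^a$. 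Combining the two cases establishes eq. (\ref{eq.: X - upper bound (E)}).

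I do not expect a genuine obstacle here: once the sign structure of eq. (\ref{X-equation (E)}) is noticed, the result is a Gr\"onwall-type comparison. The only mildly delicate point is the forward-invariance claim, which if preferred can be bypassed altogether by applying the same differential inequality to the locally Lipschitz function $(\X^a - C^a)_+ = \max\{ \X^a - C^a, 0 \}$, whose almost-everywhere derivative satisfies $[(\X^a - C^a)_+]' \le - (\X^a - C^a)_+ (\Z)^2$, so that Gr\"onwall gives $(\X^a(\tau) - C^a)_+ \le (\X^a(0) - C^a)_+ \, \e^{- \textnormal{\textphi}(\tau)} \to 0$ in a single line, uniformly in the two cases above.
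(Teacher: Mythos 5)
Your proposal is correct and rests on exactly the same ingredient as the paper's proof: the differential inequality $\X'{}^a \leq (C^a - \X^a)\,(\Z)^2$ integrated via the factor $\e^{\textnormal{\textphi}(\tau)}$ (a Gr\"onwall comparison). The paper simply carries this out in one uniform step, computing $\frac{\de}{\de\tau}\bigl[\e^{\textnormal{\textphi}(\tau)}\X^a(\tau)\bigr] \leq C^a\,\e^{\textnormal{\textphi}(\tau)}[\Z(\tau)]^2$ and integrating, which avoids your case split and additionally records the universal bound $\X^a(\tau) \leq C^a + \e^{-\textnormal{\textphi}(\tau)}\bigl[\X^a(0)-C^a\bigr]$ reused later in the appendix.
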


\begin{proof}
    In view of eq. (\ref{X-equation (E)}), one can write the inequality
    \begin{align*}
        \X'{}^a = - \X^a (z)^2 + \sum_{i} {c_{i}}^{a} (\Z^i)^2 \leq \bigl[ -\X^a + C^a \bigr] \, (\Z)^2.
    \end{align*}
    Then, one can write
    \begin{align*}
        \dfrac{\de}{\de \tau} \, \bigl[ \e^{\textnormal{\textphi}(\tau)} \X^a(t) \bigr] = \e^{\textnormal{\textphi}(\tau)} \bigl[ \X'{}^a(\tau) + \X^a(\tau) \, [\Z(\tau)]^2 \bigr] \leq C^a \, \e^{\textnormal{\textphi}(\tau)} [\Z(\tau)]^2.
    \end{align*}
    By integrating this inequality, one then finds the further inequality
    \begin{align*}
        \e^{\textnormal{\textphi}(\tau)} \X^a(\tau) - \X^a(0) \leq C^a \int_{0}^\tau \de \sigma \; \e^{\textnormal{\textphi}(\sigma)} [\Z(\sigma)]^2 = C^a \, \bigl[ \e^{\textnormal{\textphi}(\tau)} - 1 \bigr].
    \end{align*}
    If $\smash{\lim_{\tau \to \infty} \textnormal{\textphi}(\tau) = \infty}$, then the conclusion in eq. (\ref{eq.: X - upper bound (E)}) holds.
\end{proof}

\vspace{8pt}
\begin{lemma} \label{lemma: X - lower bound (E)}
    If $\smash{\lim_{\tau \to \infty} \textnormal{\textphi}(\tau) = \infty}$, then one has
    \begin{equation} \label{eq.: X - lower bound (E)}
        \liminf_{\tau \to \infty} \X^a(\tau) \geq c^a.
    \end{equation}
\end{lemma}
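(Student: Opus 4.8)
The plan is to mirror the proof of Lemma \ref{lemma: X - upper bound (E)}, exploiting the symmetry of eq. (\ref{X-equation (E)}) under the replacements $c^a \leftrightarrow C^a$ and reversed inequalities. First I would start from eq. (\ref{X-equation (E)}) and use the definition $\smash{c^a = \min_i {c_i}^a}$ together with $(\Z^i)^2 \geq 0$ to bound the source term from below, obtaining
\begin{align*}
    \X'{}^a = - \X^a (\Z)^2 + \sum_i {c_i}^a (\Z^i)^2 \geq \bigl[ -\X^a + c^a \bigr] \, (\Z)^2.
\end{align*}
Next I would introduce the integrating factor $\e^{\textnormal{\textphi}(\tau)}$, exactly as in the previous lemma, and compute
\begin{align*}
    \dfrac{\de}{\de \tau} \bigl[ \e^{\textnormal{\textphi}(\tau)} \X^a(\tau) \bigr] = \e^{\textnormal{\textphi}(\tau)} \bigl[ \X'{}^a(\tau) + \X^a(\tau) \, [\Z(\tau)]^2 \bigr] \geq c^a \, \e^{\textnormal{\textphi}(\tau)} [\Z(\tau)]^2,
\end{align*}
where I have used $\smash{\textnormal{\textphi}'(\tau) = [\Z(\tau)]^2}$, which follows from the definition $\smash{\textnormal{\textphi}(\tau) = \int_0^\tau \de\sigma\,[\Z(\sigma)]^2}$ recorded just before Lemma \ref{lemma: X - upper bound (E)}.

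Then I would integrate this differential inequality from $0$ to $\tau$ and evaluate the elementary integral on the right-hand side, $\smash{\int_0^\tau \de\sigma\,\e^{\textnormal{\textphi}(\sigma)} [\Z(\sigma)]^2 = \e^{\textnormal{\textphi}(\tau)} - 1}$ (again since $\smash{\textnormal{\textphi}' = \Z^2}$), to obtain
\begin{align*}
    \e^{\textnormal{\textphi}(\tau)} \X^a(\tau) - \X^a(0) \geq c^a \bigl[ \e^{\textnormal{\textphi}(\tau)} - 1 \bigr].
\end{align*}
Rearranging gives $\smash{\X^a(\tau) \geq c^a + \e^{-\textnormal{\textphi}(\tau)} \bigl( \X^a(0) - c^a \bigr)}$. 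Finally, under the hypothesis $\smash{\lim_{\tau\to\infty}\textnormal{\textphi}(\tau) = \infty}$, the correction term $\e^{-\textnormal{\textphi}(\tau)}(\X^a(0) - c^a)$ vanishes in the limit, so taking $\liminf_{\tau\to\infty}$ yields eq. (\ref{eq.: X - lower bound (E)}).

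This proof is essentially routine and I do not anticipate a genuine obstacle; the only point requiring minor care is the direction of the inequality when multiplying through by the (positive) integrating factor $\e^{\textnormal{\textphi}(\tau)}$ and when passing to $\liminf$ rather than $\limsup$ — one must be sure that the surviving term $c^a$ is being approached from above, so that it is a lower bound on the $\liminf$ and not merely on the $\limsup$. Since $\e^{-\textnormal{\textphi}(\tau)} \to 0$, the bound $\X^a(\tau) \geq c^a + \e^{-\textnormal{\textphi}(\tau)}(\X^a(0)-c^a)$ holds for every $\tau$ and its right-hand side tends to $c^a$, which is exactly what is needed. The hypothesis $\smash{\lim_{\tau\to\infty}\textnormal{\textphi}(\tau)=\infty}$ is used in precisely the same way as in Lemma \ref{lemma: X - upper bound (E)} and will presumably be discharged later by a separate argument showing that $\textnormal{\textphi}$ indeed diverges (equivalently, that $(\Z)^2$ does not decay too fast).
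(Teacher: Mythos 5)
Your proposal is correct and coincides with the paper's own proof: the same lower-bound estimate from eq. (\ref{X-equation (E)}) with $c^a = \min_i {c_i}^a$, the same integrating factor $\e^{\textnormal{\textphi}(\tau)}$, the same integration giving $\e^{\textnormal{\textphi}(\tau)} \X^a(\tau) - \X^a(0) \geq c^a [\e^{\textnormal{\textphi}(\tau)} - 1]$, and the same passage to the limit under the hypothesis $\lim_{\tau \to \infty} \textnormal{\textphi}(\tau) = \infty$. Your rearranged intermediate bound is in fact exactly eq. (\ref{eq.: x universal lower bound}), which the paper records separately in a remark.
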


\begin{proof}
    In view of eq. (\ref{X-equation (E)}), one can write the inequality
    \begin{align*}
        \X'{}^a = -\X^a (\Z)^2 + \sum_{i} {c_{i}}^{a} (\Z^i)^2 \geq \bigl[ -\X^a + c^a \bigr] (\Z)^2.
    \end{align*}
    Then, one can write
    \begin{align*}
        \dfrac{\de}{\de \tau} \, \bigl[ \e^{\textnormal{\textphi}(\tau)} \X^a(\tau) \bigr] = \e^{\textnormal{\textphi}(\tau)} \bigl[ \X'{}^a(\tau) + \X^a(\tau) \, \bigl[ \Z(\tau) \bigr]^2 \bigr] \geq c^a \, \e^{\textnormal{\textphi}(\tau)} \bigl[ \Z(t) \bigr]^2.
    \end{align*}
    By integrating this inequality, one then finds the further inequality
    \begin{align*}
        \e^{\textnormal{\textphi}(\tau)} \X^a(\tau) - \X^a(0) \geq c^a \int_{0}^\tau \de \sigma \; \e^{\textnormal{\textphi}(\sigma)} \bigl[ \Z(\sigma) \bigr]^2 = c^a \, \bigl[ \e^{\textnormal{\textphi}(\tau)} - 1 \bigr].
    \end{align*}
    If $\smash{\lim_{\tau \to \infty} \textnormal{\textphi}(\tau) = \infty}$, then the conclusion in eq. (\ref{eq.: X - lower bound (E)}) holds.
\end{proof}

\vspace{8pt}
\begin{remark}
    In view of eqs. (\ref{X-equation (E)}, \ref{Z-equation (E)}), by defining $\smash{\X'{}^a = - \X^a}$, one can reformulate the problem in terms of the set of constants $\smash{c'_i{}^a = - c_i{}^a}$. Because $\smash{\sup \X^a = - \inf \X'{}^a}$ and $\smash{c'{}^a = \min_i c'_i{}^a = - \max_i {c_i}^a = - C^a}$, the constraint $\smash{\liminf_{\tau \to \infty} \X^a(\tau) \geq c^a}$ in eq. (\ref{eq.: X - lower bound (E)}) is equivalent to the constraint $\smash{\limsup_{\tau \to \infty} x'{}^a(\tau) \leq C'{}^a}$ in eq. (\ref{eq.: X - upper bound (E)}).
\end{remark}

\vspace{8pt}
\begin{remark}
    In the proofs of lemmas \ref{lemma: X - upper bound (E)} and \ref{lemma: X - lower bound (E)}, two important inequalities have been derived. Independently of the value of $\smash{\lim_{\tau \to \infty} \textnormal{\textphi}(\tau)}$, one has
    \begin{subequations}
        \begin{align}
            \X^a(\tau) & \leq C^a + \e^{-\textnormal{\textphi}(\tau)} \bigl[ \X^a(0) - C^a \bigr], \label{eq.: x universal upper bound} \\
            \X^a(\tau) & \geq c^a + \e^{-\textnormal{\textphi}(\tau)} \bigl[ \X^a(0) - c^a \bigr]. \label{eq.: x universal lower bound}
        \end{align}
    \end{subequations}
\end{remark}

\vspace{8pt}
\begin{theorem} \label{theorem: X^2-bound (E)}
    If $\smash{\lim_{\tau \to \infty} \textnormal{\textphi}(\tau) = \infty}$, then one has
    \begin{equation} \label{eq.: X^2-bound (E)}
        \limsup_{\tau \to \infty} \bigl[ \X(\tau) \bigr]^2 \leq (C_\star)^2.
    \end{equation}
\end{theorem}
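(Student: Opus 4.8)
The plan is to deduce the bound componentwise from Lemmas~\ref{lemma: X - upper bound (E)} and~\ref{lemma: X - lower bound (E)} (equivalently, from the universal inequalities~(\ref{eq.: x universal upper bound}) and~(\ref{eq.: x universal lower bound})) and then reassemble the pieces. First, since $\lim_{\tau\to\infty}\textnormal{\textphi}(\tau)=\infty$, Lemma~\ref{lemma: X - upper bound (E)} gives $\limsup_{\tau\to\infty}\X^a(\tau)\le C^a$ and Lemma~\ref{lemma: X - lower bound (E)} gives $\liminf_{\tau\to\infty}\X^a(\tau)\ge c^a$ for every field index $a$. Hence for each $a$ and each $\varepsilon>0$ there is a time beyond which $c^a-\varepsilon\le\X^a(\tau)\le C^a+\varepsilon$.

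Second, I square this trapping interval. On $[c^a-\varepsilon,\,C^a+\varepsilon]$ the function $s\mapsto s^2$ attains its maximum at an endpoint, so eventually $[\X^a(\tau)]^2\le\max\{(c^a-\varepsilon)^2,(C^a+\varepsilon)^2\}$; the case $c^a\le 0\le C^a$ causes no trouble, since the presence of $0$ in the interval only lowers the maximum of the square. Letting $\varepsilon\to0$ yields
\begin{equation*}
    \limsup_{\tau\to\infty}\,[\X^a(\tau)]^2\le\max\{(c^a)^2,(C^a)^2\}=(C_\star^a)^2,
\end{equation*}
by the definition $C_\star^a=\max\{|c^a|,|C^a|\}$.

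Third, I sum over $a$. Using $(\X)^2=\sum_a[\X^a]^2$ together with the subadditivity of $\limsup$,
\begin{equation*}
    \limsup_{\tau\to\infty}\,[\X(\tau)]^2\le\sum_{a}\limsup_{\tau\to\infty}\,[\X^a(\tau)]^2\le\sum_a(C_\star^a)^2=(C_\star)^2,
\end{equation*}
which is~(\ref{eq.: X^2-bound (E)}).

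There is no serious obstacle here: the analytic content is entirely contained in the two preceding lemmas, and the only points requiring a little care are that the componentwise trapping interval may straddle the origin (harmless for the square) and that one must pass from componentwise $\limsup$/$\liminf$ control to a $\limsup$ bound on the sum of squares, for which subadditivity of $\limsup$ suffices. One could alternatively avoid $\varepsilon$'s altogether by squaring the explicit bounds~(\ref{eq.: x universal upper bound}) and~(\ref{eq.: x universal lower bound}) directly and sending $\e^{-\textnormal{\textphi}(\tau)}\to0$.
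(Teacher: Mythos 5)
Your proof is correct and follows essentially the same route as the paper's: both deduce from Lemmas~\ref{lemma: X - upper bound (E)} and~\ref{lemma: X - lower bound (E)} that each component satisfies $\limsup_{\tau\to\infty}\ab\X^a(\tau)\ab\le C_\star^a$, then square and sum over $a$. Your version merely spells out the $\varepsilon$-trapping and the subadditivity of $\limsup$ that the paper leaves implicit.
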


\begin{proof}
    By eqs. (\ref{eq.: X - upper bound (E)}, \ref{eq.: X - lower bound (E)}), it is obvious that one has $\smash{\limsup_{\tau \to \infty} \ab \X^a(\tau) \ab \leq C_\star^a}$. Because $\smash{\ab \X^a(\tau) \ab \geq 0}$, eq. (\ref{eq.: X^2-bound (E)}) follows immediately.
\end{proof}

\vspace{8pt}
\begin{corollary}
    If $\smash{(C_\star)^2 < 1}$, then $\smash{\lim_{\tau \to \infty} \textnormal{\textphi}(\tau) < \infty}$.
\end{corollary}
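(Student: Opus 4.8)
The plan is to argue by contraposition, leaning directly on Theorem \ref{theorem: X^2-bound (E)} and the hyperbola constraint of eq. (\ref{(X,Y)-plane hyperobola-condition (E)}). First I would record a preliminary observation: since $\smash{\textnormal{\textphi}(\tau) = \int_0^\tau \de\sigma \, [\Z(\sigma)]^2}$ has a non-negative integrand, it is a monotonically non-decreasing function of $\tau$, so the limit $\smash{\lim_{\tau\to\infty}\textnormal{\textphi}(\tau)}$ always exists in $[0,+\infty]$. Consequently the assertion ``$\smash{\lim_{\tau\to\infty}\textnormal{\textphi}(\tau)<\infty}$'' is exactly the negation of ``$\smash{\lim_{\tau\to\infty}\textnormal{\textphi}(\tau)=\infty}$'', which is the hypothesis under which Theorem \ref{theorem: X^2-bound (E)} gives information.

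Then I would suppose, towards a contradiction, that $\smash{\lim_{\tau\to\infty}\textnormal{\textphi}(\tau)=\infty}$. Under this assumption Theorem \ref{theorem: X^2-bound (E)} applies and yields $\smash{\limsup_{\tau\to\infty}[\X(\tau)]^2 \leq (C_\star)^2}$. On the other hand, the constraint in eq. (\ref{(X,Y)-plane hyperobola-condition (E)}) states $(\X)^2-(\Z)^2=1$, so that $(\X)^2 = 1+(\Z)^2 \geq 1$ at every $\tau$, whence $\smash{\limsup_{\tau\to\infty}[\X(\tau)]^2 \geq 1}$. Chaining the two bounds gives $1 \leq (C_\star)^2$, in direct contradiction with the hypothesis $(C_\star)^2 < 1$. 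Therefore the supposition fails and $\smash{\lim_{\tau\to\infty}\textnormal{\textphi}(\tau)<\infty}$, as claimed.

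I do not anticipate any genuine obstacle: the statement is essentially the contrapositive of Theorem \ref{theorem: X^2-bound (E)}, the only point requiring a moment's care being the monotonicity remark, which is what makes the limit of $\textnormal{\textphi}$ well defined and hence makes ``finite'' the right conclusion to draw from ruling out ``infinite''. As an alternative one could run a quantitative argument using the universal inequalities in eqs. (\ref{eq.: x universal upper bound}, \ref{eq.: x universal lower bound}): if $\textnormal{\textphi}(\tau)$ grew without bound, each $\X^a(\tau)$ would be driven into an arbitrarily small neighbourhood of the interval $[c^a,C^a]$, forcing $(\X)^2$ eventually below $(C_\star)^2<1$ and thus violating $(\X)^2\geq 1$; but the contradiction route above is the cleanest and I would present that one.
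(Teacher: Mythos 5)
Your argument is correct and is essentially the paper's own proof: assume $\textnormal{\textphi}(\tau)\to\infty$, invoke Theorem \ref{theorem: X^2-bound (E)} to force $\limsup_{\tau\to\infty}[\X(\tau)]^2\leq (C_\star)^2<1$, and contradict the constraint $(\X)^2-(\Z)^2=1$, which keeps $(\X)^2\geq 1$. Your added remarks on the monotonicity of $\textnormal{\textphi}$ and the careful use of $\limsup$ only make the paper's terse argument more explicit.
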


\begin{proof}
    By contradiction, let $\smash{\lim_{\tau \to \infty} \textnormal{\textphi}(\tau) = \infty}$. Then one can write $\smash{(\X)^2 \leq (C_\star)^2 < 1}$, which is impossible by eq. (\ref{(X,Y)-plane hyperobola-condition (E)}). Therefore, one cannot have $\smash{\lim_{\tau \to \infty} \textnormal{\textphi}(\tau) = \infty}$.
\end{proof}

\vspace{8pt}
\begin{theorem} \label{theorem: non-proper attractors (E)}
    If $\smash{\lim_{\tau \to \infty} \textnormal{\textphi}(\tau) < \infty}$, then
    \begin{subequations}
    \begin{align}
        \lim_{\tau \to \infty} \X^a(\tau) & = \tilde{\X}^a,  \label{eq.: unique non-proper X-attractor (E)} \\
        \lim_{\tau \to \infty} \Z^i(\tau) & = 0,  \label{eq.: unique non-proper Z-attractor (E)}
    \end{align}
    \end{subequations}
    where, in particular, one has $\smash{(\tilde{\X})^2 = 1}$.
\end{theorem}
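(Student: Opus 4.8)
The plan is to extract everything from the hypothesis $\lim_{\tau\to\infty}\textnormal{\textphi}(\tau)<\infty$, i.e. $\int_0^\infty[\Z(\sigma)]^2\,\de\sigma<\infty$, combined with the two universal bounds of eqs. (\ref{eq.: x universal upper bound}, \ref{eq.: x universal lower bound}) and the constraint of eq. (\ref{(X,Y)-plane hyperobola-condition (E)}). The first observation is that $\X^a$ is uniformly bounded: since $\textnormal{\textphi}\geq0$ one has $\e^{-\textnormal{\textphi}(\tau)}\in\,]0,1]$, so eqs. (\ref{eq.: x universal upper bound}, \ref{eq.: x universal lower bound}) force $\min\{c^a,\X^a(0)\}\leq\X^a(\tau)\leq\max\{C^a,\X^a(0)\}$ for all $\tau$; call $M$ a common bound for all the $|\X^a|$.

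Next I would convert eq. (\ref{X-equation (E)}) into an integrability statement. Since $|\sum_i{c_i}^a(\Z^i)^2|\leq C_\star^a(\Z)^2$, one gets $|\X'{}^a(\tau)|\leq(|\X^a(\tau)|+C_\star^a)[\Z(\tau)]^2\leq(M+C_\star^a)[\Z(\tau)]^2$, and the right-hand side is integrable on $[0,\infty)$ by hypothesis. Hence $\int_0^\infty\X'{}^a(\sigma)\,\de\sigma$ converges absolutely, so $\X^a(\tau)=\X^a(0)+\int_0^\tau\X'{}^a(\sigma)\,\de\sigma$ has a finite limit $\tilde{\X}^a$ as $\tau\to\infty$; this is eq. (\ref{eq.: unique non-proper X-attractor (E)}).

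Finally, I would feed this back into the constraint: eq. (\ref{(X,Y)-plane hyperobola-condition (E)}) gives $[\Z(\tau)]^2=[\X(\tau)]^2-1\to(\tilde{\X})^2-1\geq0$. Since $[\Z]^2$ is nonnegative, has finite integral over $[0,\infty)$, and converges to the constant $(\tilde{\X})^2-1$, that constant must vanish — otherwise $[\Z(\tau)]^2$ would be bounded below by a fixed positive number for all large $\tau$, making the integral diverge. Therefore $(\tilde{\X})^2=1$ and $[\Z(\tau)]^2\to0$, and since $0\leq(\Z^i)^2\leq(\Z)^2$, each component $\Z^i(\tau)\to0$, which is eq. (\ref{eq.: unique non-proper Z-attractor (E)}).

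The argument is entirely soft analysis, so I do not expect a serious obstacle; the only step needing care is the absolute-convergence argument for $\X^a$, whose engine is precisely the a priori boundedness of $\X^a$ coming from the universal bounds, together with the remark that once $\X$ converges the constraint makes $[\Z]^2$ converge too, and convergence plus integrability then pins the limit of $[\Z]^2$ to zero.
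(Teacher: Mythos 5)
Your proposal is correct and follows essentially the same route as the paper's proof: a priori boundedness of $\X^a$ from the universal bounds, integrability of $[\Z]^2$ turned into convergence of $\X^a$ (your absolute-integrability argument is just the paper's Cauchy-sequence step rephrased), and then the constraint $(\X)^2-(\Z)^2=1$ plus integrability forcing $(\tilde{\X})^2=1$, which is the paper's contradiction argument in direct form. No gaps.
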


\begin{proof}
    By the bounds in eqs. (\ref{eq.: x universal upper bound}, \ref{eq.: x universal lower bound}), $\X^a$ is such that $\ab \X^a \ab$ is universally bounded as $|\X^a| <C_\star^a+|\X^a(0)|$.
    
    For any given $a$-index, the inequalities hold
    \begin{align*}
        \bigl[ -\X^a + C^a \bigr] (\Z)^2 \geq \X'{}^a \geq \bigl[ -\X^a + c^a \bigr] (\Z)^2.
    \end{align*}
    If $\smash{\X'{}^a \geq 0}$, then
    \begin{align*}
        \bigl\ab \X'{}^a \bigr\ab \leq \bigl\ab -\X^a + C^a \bigr\ab (\Z)^2 \leq \bigl[ \ab \X^a \ab + \ab C^a \ab \bigr] (\Z)^2.
    \end{align*}
    If $\smash{\X'{}^a < 0}$, then
    \begin{align*}
        \bigl\ab \X'{}^a \bigr\ab \leq \bigl\ab -\X^a + c^a \bigr\ab (\Z)^2 \leq \bigl[ \ab \X^a \ab + \ab c^a \ab \bigr] (\Z)^2.
    \end{align*}
    Therefore, one can write
    \begin{align*}
        \bigl\ab \X'{}^a \bigr\ab \leq \bigl[ \ab \X^a \ab + C_\star^a \bigr] \, (\Z)^2 \leq \bigl[ \ab \X^a(0) \ab + 2 C_\star^a \bigr] \, (\Z)^2.
    \end{align*}
    Because $\smash{\lim_{\tau \to \infty} \textnormal{\textphi}(\tau) = \lim_{\tau \to \infty} \int_{0}^\tau \de \sigma \; \bigl[ \Z(\sigma) \bigr]^2 < \infty}$, for any arbitrary constant $\epsilon^a > 0$, there exists a time $\smash{\tau_{\epsilon^a} > 0}$ such that
    \begin{align*}
        \int_{\tau_{\epsilon^a}}^\infty \de \sigma \; \bigl[ \Z(\sigma) \bigr]^2 < \dfrac{\epsilon^a}{\ab \X^a(0) \ab + 2 C_\star^a}.
    \end{align*}
    Therefore, for a time $\tau_*$ in the range $\smash{\tau > \tau_* >  \tau_{\epsilon^a}}$, one can write
    \begin{align*}
        \bigl\ab \X^a(\tau) - \X^a(\tau_*) \bigr\ab = \biggl\ab \int_{\tau_*}^\tau \de \sigma \; \X'{}^a(\sigma) \biggr\ab \leq \int_{\tau_*}^\tau \de \sigma \; \bigl\ab \X'{}^a(\sigma) \bigr\ab \leq \int_{\tau_*}^\tau \de \sigma \; \bigl[ \ab \X^a(0) \ab + 2 C_\star^a \bigr] \, \bigl[ \Z(\sigma) \bigr]^2 < \epsilon^a.
    \end{align*}
    This means that $\smash{\lbrace \X^a(\tau) \rbrace_{\tau >0}}$ is a Cauchy sequence, which implies the existence of the limit
    \begin{align*}
        \lim_{\tau \to \infty} \X^a(\tau) = \tilde{\X}^a.
    \end{align*}
    Furthermore, this implies the existence of the limit $\smash{\lim_{\tau \to \infty} \bigl[\X(\tau)\bigr]^2 = (\tilde{\X})^2}$. Therefore, for any arbitrary constant $\delta > 0$, there exists a time $\smash{\tau_{\delta} > 0}$ such that $\smash{\bigl\ab \bigl[\X(\tau)\bigr]^2 - (\tilde{\X})^2 \bigr\ab < \delta}$ at all times $\smash{\tau > \tau_\delta}$. For all times $\tau > \tau_\delta$, one can thus write the inequality
    \begin{align*}
        \bigl[ \X(\tau) \bigr]^2 > (\tilde{\X})^2 - \delta.
    \end{align*}
    Assuming the condition $\smash{(\tilde{\X})^2 > 1}$ to hold by contradiction, then, by choosing $\delta$ such that $\delta < (\tilde{\X})^2 - 1$, one can write
    \begin{align*}
        \int_{\tau_\delta}^\infty \de \sigma \; \bigl[ \Z(\sigma) \bigr]^2 = \int_{\tau_\delta}^\infty \de \sigma \; \bigl[ \bigl[\X(\sigma)\bigr]^2 - 1 \bigr] > \bigl[ (\tilde{\X})^2 - \delta - 1 \bigr]\int_{\tau_\delta}^\infty \de \sigma = \infty,
    \end{align*}
    which is impossible. Hence, one has $\smash{(\tilde{\X})^2=1}$.
\end{proof}

\vspace{8pt}
\begin{remark}
    Although $\smash{(\X)^2 \geq 1}$, the individual terms $\smash{\X^a}$ are not constrained in any way. Moreover, the fact that $\smash{(\X)^2}$ is bounded implies that $\smash{(\Z)^2}$ is also bounded.
\end{remark}

\vspace{8pt}
\begin{corollary} \label{corollary: (x)^2-bounds (E)}
    If $\smash{(C_\star)^2 \geq 1}$ and if $\smash{\lim_{t \to t_\star} \varphi(t) = \infty}$, then one has
    \begin{equation}
        1 \leq \limsup_{t \to t_\star} \bigl[ x(t) \bigr]^2 \leq (C_\star)^2;
    \end{equation}
    else, one has
    \begin{equation}
        \lim_{t \to t_\star} \bigl[ x(t) \bigr]^2 = 1.
    \end{equation}
\end{corollary}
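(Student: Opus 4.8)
The plan is simply to transport theorems \ref{theorem: X^2-bound (E)} and \ref{theorem: non-proper attractors (E)}, which are stated for the reparametrized variables $\X^a(\tau)$, back to the original cosmological time $t$ and the variables $x^a(t)$. The bridge is the change of variables $u=u(\tau)$ already constructed above: since $u'(\tau)=1/g[u(\tau)]>0$, the map $u$ is a strictly increasing continuous bijection of $[t_0,\infty[$ onto $[t_0,t_\star[$ with $\lim_{\tau\to\infty}u(\tau)=t_\star$, and one has $\textnormal{\textphi}(\tau)=\varphi[u(\tau)]$ as well as $\X^a(\tau)=x^a[u(\tau)]$. The first thing I would record is that, because $u$ is such a bijection, the reparametrization preserves both ordinary limits and $\limsup$'s: for any function $F$ on $[t_0,t_\star[$, $\limsup_{t\to t_\star}F(t)=\limsup_{\tau\to\infty}F[u(\tau)]$, and likewise $\lim_{t\to t_\star}F(t)$ exists iff $\lim_{\tau\to\infty}F[u(\tau)]$ does and then the two coincide. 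Applying this to $F=\varphi$ gives $\lim_{t\to t_\star}\varphi(t)=\lim_{\tau\to\infty}\textnormal{\textphi}(\tau)$, and applying it to $F=(x)^2$ gives $\limsup_{t\to t_\star}[x(t)]^2=\limsup_{\tau\to\infty}[\X(\tau)]^2$.

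With that in hand, I would split into the two cases of the statement. In the case $(C_\star)^2\geq1$ together with $\lim_{t\to t_\star}\varphi(t)=\infty$, one has $\lim_{\tau\to\infty}\textnormal{\textphi}(\tau)=\infty$, so theorem \ref{theorem: X^2-bound (E)} gives $\limsup_{\tau\to\infty}[\X(\tau)]^2\leq(C_\star)^2$, hence $\limsup_{t\to t_\star}[x(t)]^2\leq(C_\star)^2$ by the previous paragraph. The matching lower bound is immediate from the constraint in eq. (\ref{hyperbola-condition (E)}), namely $(x)^2-(z)^2=1$, which forces $[x(t)]^2\geq1$ at every time and hence $\limsup_{t\to t_\star}[x(t)]^2\geq1$; together these give the first displayed chain of inequalities. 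In the complementary case --- which, by the corollary stated right after theorem \ref{theorem: X^2-bound (E)}, is precisely the case $\lim_{t\to t_\star}\varphi(t)<\infty$ (this is automatic when $(C_\star)^2<1$ and is the remaining alternative when $(C_\star)^2\geq1$) --- one has $\lim_{\tau\to\infty}\textnormal{\textphi}(\tau)<\infty$, so theorem \ref{theorem: non-proper attractors (E)} applies and yields $\lim_{\tau\to\infty}[\X(\tau)]^2=(\tilde{\X})^2=1$; transporting this back along $u$ gives $\lim_{t\to t_\star}[x(t)]^2=1$, which is the second assertion.

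I do not anticipate a genuine obstacle here: all the analytic content lives in theorems \ref{theorem: X^2-bound (E)} and \ref{theorem: non-proper attractors (E)}, and what remains is only the routine verification that passing through the time reparametrization $u$ is an exact equivalence for limits and $\limsup$'s, which rests entirely on $u$ being the monotone continuous surjection onto $[t_0,t_\star[$ established above with $u(\tau)\to t_\star$. The only mild point of care is the logical parsing of the ``else'' in the statement: read against the compound hypothesis ``$(C_\star)^2\geq1$ and $\lim_{t\to t_\star}\varphi(t)=\infty$'', its negation is ``$(C_\star)^2<1$ or $\lim_{t\to t_\star}\varphi(t)<\infty$'', and invoking the corollary after theorem \ref{theorem: X^2-bound (E)} collapses the first alternative into the second, so that theorem \ref{theorem: non-proper attractors (E)} covers the whole ``else'' branch uniformly.
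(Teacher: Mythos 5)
Your proposal is correct and follows essentially the same route as the paper: the paper's own proof simply observes that, since $\lim_{\tau\to\infty}u(\tau)=t_\star$, limits (and limsups) of any reparametrized function transfer between $\tau$ and $t$, so theorems \ref{theorem: X^2-bound (E)} and \ref{theorem: non-proper attractors (E)} translate immediately. Your added details---the lower bound $1$ coming from the constraint $(x)^2-(z)^2=1$, and the parsing of the ``else'' branch via the corollary that $(C_\star)^2<1$ forces $\lim_{\tau\to\infty}\textnormal{\textphi}(\tau)<\infty$---are exactly the implicit steps the paper leaves to the reader.
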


\begin{proof}
    Because $\smash{\lim_{\tau \to \infty} u(\tau) = t_\star}$, one has $\smash{\lim_{\tau \to \infty} \mathsf{h}(\tau) = \lim_{t \to t_\star} h(t)}$ for any function $\smash{\mathsf{h}(\tau) = h [u(\tau)]}$. Hence, the results of theorems \ref{theorem: X^2-bound (E)} and \ref{theorem: non-proper attractors (E)} translate immediately from the variable $\tau$ to the variable $t$.
\end{proof}

\vspace{12pt}
As the functions $\X^a(\tau)$ are limited from above and below according to lemmas \ref{lemma: X - upper bound (E)} and \ref{lemma: X - lower bound (E)}, along with the conclusions of theorem \ref{theorem: non-proper attractors (E)} too, formally the same identical conclusions as in ref. \cite{Shiu:2023nph} for the lower bound for $[\X(\tau)]^2$, and hence for $[x(t)]^2$, also hold. For each $a$-index, if $c^a > 0$, let $\smash{c_\diamond^a = c^a}$; if $c^a \leq 0$, then let $\smash{c_\diamond^a = 0}$.

\vspace{8pt}
\begin{corollary} \label{corollary: (x)^2 lower bound (E)}
    If $\smash{\lim_{t \to t_\star} \varphi(t) = \infty}$, then one has
    \begin{equation}
        \liminf_{t \to t_\star} \bigl[ x(t) \bigr]^2 \geq (c_\diamond)^2,
    \end{equation}
    which is trivial if $\smash{(c_\diamond)^2 < 1}$; else, one has
    \begin{equation}
        \lim_{t \to t_\star} \bigl[ x(t) \bigr]^2 = 1.
    \end{equation}
\end{corollary}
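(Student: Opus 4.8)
The plan is to obtain both conclusions by transcribing the $\tau$-asymptotics of the rescaled variables $\smash{\X^a(\tau)}$ into statements at the physical big-crunch time $\smash{t_\star}$. Since $\smash{\lim_{\tau\to\infty}u(\tau)=t_\star}$, every rescaled quantity $\smash{\mathsf{h}(\tau)=h[u(\tau)]}$ satisfies $\smash{\lim_{\tau\to\infty}\mathsf{h}(\tau)=\lim_{t\to t_\star}h(t)}$; in particular $\smash{\lim_{\tau\to\infty}\textnormal{\textphi}(\tau)=\lim_{t\to t_\star}\varphi(t)}$. Because $\smash{\textnormal{\textphi}}$ is non-decreasing its limit always exists in $\smash{(0,\infty]}$, so the hypothesis $\smash{\lim_{t\to t_\star}\varphi(t)=\infty}$ is exactly the hypothesis $\smash{\lim_{\tau\to\infty}\textnormal{\textphi}(\tau)=\infty}$ that powers lemmas \ref{lemma: X - upper bound (E)} and \ref{lemma: X - lower bound (E)}, while its failure is precisely the hypothesis of theorem \ref{theorem: non-proper attractors (E)}. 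This is the same bookkeeping used to prove corollary \ref{corollary: (x)^2-bounds (E)}.

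Under $\smash{\lim_{t\to t_\star}\varphi(t)=\infty}$, lemmas \ref{lemma: X - upper bound (E)} and \ref{lemma: X - lower bound (E)} give $\smash{c^a\le\liminf_{\tau\to\infty}\X^a\le\limsup_{\tau\to\infty}\X^a\le C^a}$ for each $a$. I would then pass to squares component-by-component. For an index with $\smash{c^a>0}$ the lower bound forces $\smash{\X^a}$ eventually positive and bounded below by $\smash{c^a}$, whence $\smash{\liminf_{\tau\to\infty}(\X^a)^2\ge(c^a)^2=(c_\diamond^a)^2}$; for an index with $\smash{c^a\le 0}$ one only has $\smash{\liminf_{\tau\to\infty}(\X^a)^2\ge 0=(c_\diamond^a)^2}$, which is why the clipped coefficient $\smash{c_\diamond^a=\max\{c^a,0\}}$ is the correct one. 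Summing and using super-additivity of $\liminf$ gives $\smash{\liminf_{\tau\to\infty}(\X)^2\ge\sum_a(c_\diamond^a)^2=(c_\diamond)^2}$, i.e. $\smash{\liminf_{t\to t_\star}[x(t)]^2\ge(c_\diamond)^2}$. When $\smash{(c_\diamond)^2<1}$ this is subsumed by the constraint eq. (\ref{(X,Y)-plane hyperobola-condition (E)}), which already forces $\smash{[x]^2\ge 1}$; hence in that regime the bound is informative only as the trivial statement $\smash{[x]^2\ge 1}$ (i.e. $\smash{w\ge 1}$), and it becomes sharp precisely when $\smash{(c_\diamond)^2\ge 1}$.

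For the remaining (``else'') branch I would invoke theorem \ref{theorem: non-proper attractors (E)}: when the hypothesis $\smash{\lim_{t\to t_\star}\varphi(t)=\infty}$ does not hold, i.e. $\smash{\lim_{\tau\to\infty}\textnormal{\textphi}(\tau)<\infty}$, that theorem yields $\smash{\X^a\to\tilde{\X}^a}$ with $\smash{(\tilde{\X})^2=1}$, and transcribing through $\smash{u(\tau)\to t_\star}$ gives $\smash{\lim_{t\to t_\star}[x(t)]^2=1}$. It is worth noting that this \emph{equality} is a genuine non-proper-attractor statement tied to $\smash{\lim\varphi<\infty}$, and is sharper than the trivial-bound sub-case $\smash{(c_\diamond)^2<1}$ of the previous paragraph, where one obtains only $\smash{[x]^2\ge 1}$ rather than $\smash{[x]^2=1}$; this is exactly the dichotomy already exploited in corollary \ref{corollary: (x)^2-bounds (E)}, now applied to the lower bound.

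The only non-routine step is the squaring in the second paragraph: one cannot simply square the two-sided bound $\smash{c^a\le\X^a\le C^a}$, because for indices with $\smash{c^a\le 0}$ the orbit may straddle zero, so that $\smash{(\X^a)^2}$ can dip to $0$, and this is exactly what the clipping $\smash{c_\diamond^a=\max\{c^a,0\}}$ encodes. Carrying out the sign split index-by-index and then summing via super-additivity of the $\liminf$ is where the care lies; the transcription from $\tau$ to $t$ and the appeal to theorem \ref{theorem: non-proper attractors (E)} for the complementary regime are then immediate.
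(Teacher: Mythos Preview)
Your proposal is correct and matches the paper's approach: the paper's proof is a one-line appeal to lemmas~\ref{lemma: X - upper bound (E)} and~\ref{lemma: X - lower bound (E)} and theorem~\ref{theorem: non-proper attractors (E)} followed by the change of variable from $\tau$ to $t$, with the component-wise sign-splitting/clipping step deferred to ref.~\cite{Shiu:2023nph}. You have simply written out that step explicitly (the case analysis on $c^a>0$ versus $c^a\le 0$ and the super-additivity of $\liminf$), which is exactly what the reference contains; your reading of the ``else'' branch as the case $\lim_{t\to t_\star}\varphi(t)<\infty$, handled by theorem~\ref{theorem: non-proper attractors (E)}, is also the intended one.
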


\begin{proof}
    The proof follows straightforwardly from lemmas \ref{lemma: X - upper bound (E)} and \ref{lemma: X - lower bound (E)} and theorem \ref{theorem: non-proper attractors (E)}, after the change of variable from $\tau$ to $t$.
\end{proof}

\vspace{8pt}
\begin{remark}
    It should be noticed that, by construction, $\smash{(c_\diamond)^2 \leq (C_\star)^2}$. Therefore, corollaries \ref{corollary: (x)^2-bounds (E)} and \ref{corollary: (x)^2 lower bound (E)} are never in contradiction.
\end{remark}

\vspace{12pt}

Part of the above results can be generalized to the case in which some of the $z^i$-functions are imaginary. Let $\smash{(z^{i_-})^2 = (z_-^{i_-})^2 > 0}$ and $\smash{(z^{i_+})^2 = - (z_+^{i_+})^2 > 0}$, distinguishing over all possible indices $\smash{i = i_-, i_+}$. For each $a$-index, let $\smash{c_-^a = \min_{i_-} {c_{i_-}}^a}$ and $\smash{C_+^a = \max_{i_+} {c_{i_+}}^a}$, and let $\smash{C_-^a = \max_{i_-} {c_{i_-}}^a}$ and $\smash{c_+^a = \min_{i_+} {c_{i_+}}^a}$. As a further assumption, let $\smash{(z_-)^2 > (z_+)^2}$ at all times. It is apparent that lemmas \ref{lemma: g - growing function (E)} and \ref{lemma: g - lower bound (E)} as well as corollary \ref{corollary: g-divergence (E)} hold true. Therefore, one can also reformulate the dynamical system in terms of the $\tau$-variable, generalizing trivially eqs. (\ref{X-equation (E)}, \ref{Z-equation (E)}, \ref{(X,Y)-plane hyperobola-condition (E)}).

\vspace{8pt}
\begin{lemma} \label{lemma: X - upper bound 2 (E)}
    If $\smash{c_+^a \geq C_-^a}$ and if $\smash{\lim_{\tau \to \infty} \textnormal{\textphi}(\tau) = \infty}$, then one has
    \begin{equation} \label{eq.: X - upper bound 2 (E)}
        \limsup_{\tau \to \infty} \X^a(\tau) \leq c_+^a.
    \end{equation}
\end{lemma}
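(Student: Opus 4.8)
The plan is to run the integrating-factor argument of lemma \ref{lemma: X - upper bound (E)}, now carefully bookkeeping the split of the $z$-variables into the part with real entries (labelled $i_-$) and the part with imaginary entries (labelled $i_+$). Write $\smash{(\Z_-)^2 = \sum_{i_-} (\Z_-^{i_-})^2}$ and $\smash{(\Z_+)^2 = \sum_{i_+} (\Z_+^{i_+})^2}$, so that the $\tau$-versions of eqs. (\ref{X-equation (E)}, \ref{(X,Y)-plane hyperobola-condition (E)}) read $\smash{(\Z)^2 = (\Z_-)^2 - (\Z_+)^2}$ and $\smash{\X'{}^a = -\X^a (\Z)^2 + \sum_{i_-} {c_{i_-}}^a (\Z_-^{i_-})^2 - \sum_{i_+} {c_{i_+}}^a (\Z_+^{i_+})^2}$. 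By the standing assumption $\smash{(z_-)^2 > (z_+)^2}$ one has $\smash{(\Z)^2 > 0}$ (equivalently $\smash{(\X)^2 \geq 1}$ by the hyperbola constraint), so $\smash{\textnormal{\textphi}}$ is non-decreasing and the hypothesis $\smash{\lim_{\tau\to\infty}\textnormal{\textphi}(\tau)=\infty}$ with $\smash{\textnormal{\textphi}'(\tau) = [\Z(\tau)]^2}$ is the natural one.

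First I would derive the pointwise differential inequality $\smash{\X'{}^a \leq [-\X^a + c_+^a](\Z)^2}$. For the real-$z$ sum, $\smash{\sum_{i_-} {c_{i_-}}^a (\Z_-^{i_-})^2 \leq C_-^a (\Z_-)^2}$, since $\smash{C_-^a = \max_{i_-} {c_{i_-}}^a}$ and each $\smash{(\Z_-^{i_-})^2 \geq 0}$; for the imaginary-$z$ sum, $\smash{- \sum_{i_+} {c_{i_+}}^a (\Z_+^{i_+})^2 \leq - c_+^a (\Z_+)^2}$, since $\smash{c_+^a = \min_{i_+} {c_{i_+}}^a}$ forces $\smash{- {c_{i_+}}^a \leq - c_+^a}$ and each $\smash{(\Z_+^{i_+})^2 \geq 0}$. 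Combining these with $\smash{-\X^a(\Z)^2}$, and then using the hypothesis $\smash{c_+^a \geq C_-^a}$ together with $\smash{(\Z_-)^2 \geq 0}$ to replace $C_-^a$ by $c_+^a$, one gets
\[ \X'{}^a \leq -\X^a (\Z)^2 + c_+^a (\Z_-)^2 - c_+^a (\Z_+)^2 = \bigl[ -\X^a + c_+^a \bigr] (\Z)^2 . \]
The rest is verbatim the proof of lemma \ref{lemma: X - upper bound (E)}: multiplying by $\smash{\e^{\textnormal{\textphi}(\tau)}}$ and using $\smash{\textnormal{\textphi}'(\tau) = [\Z(\tau)]^2}$ gives $\smash{\de\bigl[\e^{\textnormal{\textphi}(\tau)}\X^a(\tau)\bigr]/\de\tau \leq c_+^a\, \e^{\textnormal{\textphi}(\tau)}[\Z(\tau)]^2}$, which integrates to $\smash{\e^{\textnormal{\textphi}(\tau)}\X^a(\tau) - \X^a(0) \leq c_+^a\bigl[\e^{\textnormal{\textphi}(\tau)} - 1\bigr]}$, i.e. $\smash{\X^a(\tau) \leq c_+^a + \e^{-\textnormal{\textphi}(\tau)}\bigl[\X^a(0) - c_+^a\bigr]}$; letting $\tau\to\infty$ with $\smash{\textnormal{\textphi}(\tau)\to\infty}$ yields eq. (\ref{eq.: X - upper bound 2 (E)}).

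I do not anticipate a genuine obstacle: the only new content beyond lemma \ref{lemma: X - upper bound (E)} is the sign bookkeeping in the combined estimate, where the hypothesis $\smash{c_+^a \geq C_-^a}$ is precisely what is needed to absorb the real-$z$ contribution into the $\smash{c_+^a(\Z)^2}$-term once the imaginary-$z$ contribution has been bounded. The one structural novelty is that the $i_+$ terms enter $\X'{}^a$ with the opposite sign, which is exactly why the extra ordering hypothesis on the couplings is unavoidable here — mirroring the role of the condition $\gamma_+^a \geq \Gamma_-^a$ in the discussion of eq. (\ref{lower w-bound}) in the main text.
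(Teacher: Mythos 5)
Your proof is correct and matches the paper's argument: you derive the same pointwise inequality $\X'{}^a \leq C_-^a (\Z_-)^2 - c_+^a (\Z_+)^2 - \X^a (\Z)^2 \leq [-\X^a + c_+^a](\Z)^2$ using the ordering hypothesis $c_+^a \geq C_-^a$, and then repeat the integrating-factor steps of lemma \ref{lemma: X - upper bound (E)} verbatim, exactly as the paper does.
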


\begin{proof}
    In view of eq. (\ref{X-equation (E)}) and of the inequality $\smash{c_+^a \geq C_-^a}$, one can write the inequality
    \begin{align*}
        \X'{}^a \leq - \X^a (z)^2 + C_-^{a} (\Z_-)^2 - c_+^{a} (\Z_+)^2 \leq \bigl[ -\X^a + c_+^a \bigr] \, (\Z)^2
    \end{align*}
    and replicate all the steps that lead to the proof of lemma \ref{lemma: X - upper bound (E)}.
\end{proof}

\vspace{8pt}
\begin{lemma} \label{lemma: X - lower bound 2 (E)}
    If $\smash{c_-^a \geq C_+^a}$ and if $\smash{\lim_{\tau \to \infty} \textnormal{\textphi}(\tau) = \infty}$, then one has
    \begin{equation} \label{eq.: X - lower bound 2 (E)}
        \liminf_{\tau \to \infty} \X^a(\tau) \geq c_-^a.
    \end{equation}
\end{lemma}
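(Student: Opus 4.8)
\emph{Proof proposal.} The plan is to reproduce the argument of Lemma~\ref{lemma: X - lower bound (E)} essentially verbatim, the one new ingredient being the estimate of the mixed-sign source term in eq.~(\ref{X-equation (E)}), handled in the same spirit as in the proof of Lemma~\ref{lemma: X - upper bound 2 (E)} but now oriented towards a lower bound. First I would split the source term along the positive- and negative-potential indices and bound each part by its extremal coupling: using $(\Z_-^{i_-})^2 \geq 0$ and $(\Z_+^{i_+})^2 \geq 0$ one gets $\sum_{i_-} {c_{i_-}}^a (\Z_-^{i_-})^2 \geq c_-^a (\Z_-)^2$ and $-\sum_{i_+} {c_{i_+}}^a (\Z_+^{i_+})^2 \geq -C_+^a (\Z_+)^2$, so that, with $(\Z)^2 = (\Z_-)^2 - (\Z_+)^2$,
\begin{align*}
    \X'{}^a \geq -\X^a (\Z)^2 + c_-^a (\Z_-)^2 - C_+^a (\Z_+)^2 .
\end{align*}

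Next I would feed in the hypothesis $c_-^a \geq C_+^a$: since $(\Z_+)^2 \geq 0$, this yields $-C_+^a (\Z_+)^2 \geq - c_-^a (\Z_+)^2$, so the right-hand side above collapses to $[-\X^a + c_-^a]\,(\Z)^2$. From the resulting inequality $\X'{}^a \geq [-\X^a + c_-^a](\Z)^2$ the remaining steps are word-for-word those of Lemma~\ref{lemma: X - lower bound (E)}: multiply by the integrating factor $\e^{\textnormal{\textphi}(\tau)}$, use $\textnormal{\textphi}'(\tau) = [\Z(\tau)]^2$ to recognise the left-hand side as $\de\bigl[\e^{\textnormal{\textphi}(\tau)}\X^a(\tau)\bigr]/\de\tau$, integrate from $0$ to $\tau$ with $\textnormal{\textphi}(0)=0$, and divide out by $\e^{\textnormal{\textphi}(\tau)}$ to arrive at
\begin{align*}
    \X^a(\tau) \geq c_-^a + \e^{-\textnormal{\textphi}(\tau)}\bigl[\X^a(0) - c_-^a\bigr] .
\end{align*}
Letting $\tau\to\infty$ with $\lim_{\tau\to\infty}\textnormal{\textphi}(\tau) = \infty$ then kills the exponential correction and produces eq.~(\ref{eq.: X - lower bound 2 (E)}); as with the remarks following Lemma~\ref{lemma: X - lower bound (E)}, the displayed pointwise inequality for $\X^a(\tau)$ in fact holds regardless of the value of $\lim_{\tau\to\infty}\textnormal{\textphi}(\tau)$.

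The only step that calls for genuine care — as opposed to being a literal transcription of the earlier proofs — is the sign bookkeeping when separating out the positive-potential contribution: that term enters $\X'{}^a$ with a minus sign, so it is the \emph{maximal} coupling $C_+^a$ (rather than the minimal one) that gets factored out, and the hypothesis $c_-^a \geq C_+^a$ must then be combined with $(\Z_+)^2 \geq 0$, a sum of squares, to fold it into the clean inequality $\X'{}^a \geq [-\X^a + c_-^a](\Z)^2$. Beyond this I expect no real obstacle: once that inequality is established, the integrating-factor computation — multiplying by $\e^{\textnormal{\textphi}(\tau)} > 0$, using $\textnormal{\textphi}'(\tau) = [\Z(\tau)]^2$, integrating, and dividing out — is identical to the corresponding steps of Lemma~\ref{lemma: X - lower bound (E)}, the passage to the $\liminf$ is the same, and any later translation back to the $t$-variable would proceed exactly as in Corollary~\ref{corollary: (x)^2 lower bound (E)}.
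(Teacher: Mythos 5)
Your proposal is correct and follows essentially the same route as the paper: bound the mixed-sign source term by $c_-^a(\Z_-)^2 - C_+^a(\Z_+)^2$, use $c_-^a \geq C_+^a$ together with $(\Z_+)^2 \geq 0$ to reduce to $\X'{}^a \geq [-\X^a + c_-^a](\Z)^2$, and then repeat the integrating-factor argument of Lemma \ref{lemma: X - lower bound (E)}. The paper's proof is just a terser statement of exactly these steps, so no gap remains.
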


\begin{proof}
    In view of eq. (\ref{X-equation (E)}) and of the inequality $\smash{c_-^a \geq C_+^a}$, one can write the inequality
    \begin{align*}
        \X'{}^a \geq - \X^a (z)^2 + c_-^{a} (\Z_-)^2 - C_+^{a} (\Z_+)^2 \geq \bigl[ -\X^a + c_-^a \bigr] (\Z)^2
    \end{align*}
    and replicate all the steps that lead to the proof of lemma \ref{lemma: X - lower bound (E)}.
\end{proof}

\vspace{12pt}

As the functions $\X^a(\tau)$ are limited from above and below according to lemmas \ref{lemma: X - upper bound 2 (E)} and \ref{lemma: X - lower bound 2 (E)}, formally the same identical conclusions as in ref. \cite{Shiu:2023nph} for the lower bound for $[\X(\tau)]^2$, and hence for $[x(t)]^2$, also hold. For each $a$-index, let $\smash{c_-^a \geq C_+^a}$. If $\smash{c_-^a > 0}$, let $\smash{c_\diamond^a = c_-^a}$; if $\smash{c_-^a \leq 0}$, then let $\smash{c_\diamond^a = 0}$. If the condition $\smash{c_-^a \geq C_+^a}$ is not verified even after the redefinition $\smash{x'{}^a = - x^a}$, let instead $\smash{c_\diamond^a = 0}$.

\vspace{8pt}
\begin{corollary} \label{corollary: (x)^2 lower bound 2 (E)}
    If $\smash{\lim_{t \to t_\star} \varphi(t) = \infty}$, then one has
    \begin{equation}
        \liminf_{t \to t_\star} \bigl[ x(t) \bigr]^2 \geq (c_\diamond)^2,
    \end{equation}
    which is trivial if $\smash{(c_\diamond)^2 < 1}$.
\end{corollary}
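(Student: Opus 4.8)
The plan is to follow the route of corollary \ref{corollary: (x)^2 lower bound (E)}, replacing lemmas \ref{lemma: X - upper bound (E)} and \ref{lemma: X - lower bound (E)} by their mixed-sign analogues, lemmas \ref{lemma: X - upper bound 2 (E)} and \ref{lemma: X - lower bound 2 (E)}. First I would pass to the $\tau$-variable: since $\smash{\textnormal{\textphi}(\tau) = \varphi[u(\tau)]}$ and $\smash{u(\tau) \to t_\star}$ as $\smash{\tau \to \infty}$, the hypothesis $\smash{\lim_{t \to t_\star} \varphi(t) = \infty}$ coincides with $\smash{\lim_{\tau \to \infty} \textnormal{\textphi}(\tau) = \infty}$, which is exactly what lemmas \ref{lemma: X - upper bound 2 (E)} and \ref{lemma: X - lower bound 2 (E)} require. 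All estimates are carried out for $\smash{[\X(\tau)]^2}$, and only at the end does one use $\smash{\lim_{\tau \to \infty} \mathsf{h}(\tau) = \lim_{t \to t_\star} h(t)}$ to descend to $\smash{[x(t)]^2}$.

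Next, fix an $a$-index. If $\smash{c_-^a \geq C_+^a}$, lemma \ref{lemma: X - lower bound 2 (E)} gives $\smash{\liminf_{\tau \to \infty} \X^a(\tau) \geq c_-^a}$, and when moreover $\smash{c_-^a > 0}$ -- the only case in which $\smash{c_\diamond^a \neq 0}$ here -- the positivity of that lower bound promotes it to $\smash{\liminf_{\tau \to \infty} [\X^a(\tau)]^2 \geq (c_-^a)^2 = (c_\diamond^a)^2}$. If this ordering fails but holds in the variables reflected by $\smash{x^a \mapsto - x^a}$ -- the remark following lemma \ref{lemma: X - lower bound (E)} applies verbatim, since it uses only the structure of eq. (\ref{X-equation (E)}) -- then lemma \ref{lemma: X - lower bound 2 (E)} applied there yields a negative upper bound on $\smash{\limsup_{\tau \to \infty} \X^a(\tau)}$, and since $\smash{[\X^a]^2 = [-\X^a]^2}$ this again gives $\smash{\liminf_{\tau \to \infty} [\X^a(\tau)]^2 \geq (c_\diamond^a)^2}$ for the $\smash{c_\diamond^a}$ fixed by the corollary. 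In all remaining configurations $\smash{c_\diamond^a = 0}$ and only $\smash{[\X^a(\tau)]^2 \geq 0}$ is claimed.

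Summing over $a$ and using superadditivity of $\liminf$ over a finite sum, $\smash{\liminf_{\tau \to \infty} [\X(\tau)]^2 = \liminf_{\tau \to \infty} \sum_a [\X^a(\tau)]^2 \geq \sum_a (c_\diamond^a)^2 = (c_\diamond)^2}$; translating back to the $t$-variable gives $\smash{\liminf_{t \to t_\star} [x(t)]^2 \geq (c_\diamond)^2}$, which is the assertion. Finally, the constraint (\ref{(X,Y)-plane hyperobola-condition (E)}) with the standing assumption $\smash{(z_-)^2 > (z_+)^2}$, under which $\smash{(\Z)^2 \geq 0}$, forces $\smash{[\X(\tau)]^2 = 1 + [\Z(\tau)]^2 \geq 1}$ at all times; so if $\smash{(c_\diamond)^2 < 1}$ the displayed bound is weaker than this elementary one and carries no information, which is the ``trivial'' clause.

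No step here is genuinely hard: the analytic substance is already contained in lemmas \ref{lemma: X - upper bound 2 (E)} and \ref{lemma: X - lower bound 2 (E)}, and what remains is bookkeeping. The point demanding care is that a one-sided bound on $\smash{\X^a}$ can be promoted to a lower bound on $\smash{[\X^a]^2}$ only when the bounding constant is sign-definite -- a positive lower bound, or a negative upper bound -- which is precisely why $\smash{c_\diamond^a}$ is reset to $0$ in all other cases; and one must track the sign flips induced by $\smash{x^a \mapsto - x^a}$ so that the ordering hypotheses of lemma \ref{lemma: X - lower bound 2 (E)} match the definition of $\smash{c_\diamond^a}$ in the original basis.
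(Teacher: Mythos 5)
Your proposal is correct and follows essentially the same route as the paper: the paper's proof simply invokes lemmas \ref{lemma: X - upper bound 2 (E)} and \ref{lemma: X - lower bound 2 (E)} and the change of variable from $\tau$ to $t$, deferring the componentwise bookkeeping (promotion of sign-definite one-sided bounds on $\X^a$ to lower bounds on $[\X^a]^2$, the $x^a \mapsto -x^a$ reflection, and summation over $a$) to the argument already used for corollary \ref{corollary: (x)^2 lower bound (E)} and ref. \cite{Shiu:2023nph}, which is exactly what you spell out. Your handling of the sign flips and of the triviality clause via the constraint $(\X)^2 = 1 + (\Z)^2 \geq 1$ matches the paper's intent.
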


\begin{proof}
    The proof follows straightforwardly from lemmas \ref{lemma: X - upper bound 2 (E)} and \ref{lemma: X - lower bound 2 (E)} after the change of variable from $\tau$ to $t$.
\end{proof}

\section{Negative-potential scaling solutions} \label{app: negative-potential scaling solutions}
A single-field negative-definite exponential potential $\smash{V = - K \, \e^{- \kappa_d \gamma \phi}}$ admits a scaling solution which is simply a critical point of the associated dynamical system; for more details, see app. \ref{app: bounds on ekpyrotic universes}. If $\smash{\gamma > 2 \, \sqrt{d-1}/\sqrt{d-2}}$, the solution is
\begin{subequations}
    \begin{align}
        \phi(t) & = \tilde{\phi} + \dfrac{1}{\kappa_d} \dfrac{2}{\gamma} \; \mathrm{ln} \, \biggl[ 1 - \dfrac{d-2}{4} \, \gamma^2 \tilde{H} (\tilde{t} - t) \biggr], \\
        H(t) & = \dfrac{1}{\displaystyle - \dfrac{d-2}{4} \, \gamma^2 (\tilde{t} - t) + \dfrac{1}{\tilde{H}}},
    \end{align}
\end{subequations}
where $t < \tilde{t}$. This corresponds to the collapsing scale factor
\begin{align*}
    a(t) = \tilde{a} \, \biggl[ 1 - \dfrac{d-2}{4} \gamma^2 \tilde{H} (\tilde{t} - t) \biggr]^{\frac{4}{d-2} \frac{1}{\gamma^2}},
\end{align*}
which crunches at the time $\smash{t_\star = \tilde{t} - 4 / [(d-2) \gamma^2 \tilde{H}]}$; approaching this time, one gets the singular behaviors $\smash{\phi(t) \!\overset{t \sim t_\star^-\!\!\!}{\sim}\! -\infty}$ and $\smash{H(t) \!\overset{t \sim t_\star^-\!\!\!}{\sim}\! -\infty}$. One also finds the compatibility condition
\begin{align*}
    \dfrac{2 \kappa_d^2 \, K}{(d-1)(d-2)} \,\e^{-\kappa_d \gamma \tilde{\phi}} = \biggl[ \dfrac{1}{4} \dfrac{d-2}{d-1} \, \gamma^2 - 1 \biggr] \tilde{H}^2.
\end{align*}
If $\smash{\gamma \leq 2 \, \sqrt{d-1}/\sqrt{d-2}}$, the solution is
\begin{subequations}
    \begin{align}
        \phi(t) & = \tilde{\phi} + \dfrac{1}{\kappa_d} \, \dfrac{\sqrt{d-2}}{\sqrt{d-1}} \; \mathrm{ln} \, \bigl[ 1 - (d-1) \tilde{H} (\tilde{t} - t) \bigr], \\
        H(t) & = \dfrac{1}{\displaystyle - (d-1) (\tilde{t} - t) + \dfrac{1}{\tilde{H}}}.
    \end{align}
\end{subequations}
which corresponds to a formally vanishing potential. In this case, the solution is a universal attractor.

\section{Energy density of curvature anisotropies} \label{app: curvature anisotropies}
Let a $d$-dimensional metric of the Kasner form be parameterized as
\begin{equation} \label{Kasner-type metric}
    ds^2_{1,d-1} = \tilde{g}_{\mu \nu} \de x^\mu \de x^\nu = - \de t^2 + a^2(t) \sum_{q=1}^n \e^{2 \beta_q(t)} \de l_{\mathbb{E}_{n_q}}^2,
\end{equation}
where $\smash{\sum_{q=1}^n n_q = d-1}$, subject to the Kasner-type condition
\begin{equation} \label{Kasner-type condition}
    \sum_{q=1}^n \beta_q = 0.
\end{equation}
This metric represents a deformation of the flat FLRW-metric in which the functions $\beta_q$ parameterize background anisotropies.

In order to compute how the anisotropies appear in the Einstein equations, it is convenient to work in the parameterization
\begin{align*}
    ds^2_{1,d-1} = - \de t^2 + \sum_{q=1}^n \e^{2 \gamma_q(t)} \de l_{\mathbb{E}_{n_q}}^2,
\end{align*}
where $\gamma_q = \alpha + \beta_q$, with $\alpha = \mathrm{ln} \, a$. One easily finds that the $\smash{\tilde{g}_{\mu \nu}}$-metric Ricci tensor has the non-zero components
\begin{align*}
    \tilde{R}_{00} & = - \sum_{q=1}^n n_q \bigl[ \ddot{\gamma}_q + (\dot{\gamma}_q)^2 \bigr], \\
    \tilde{R}_{i_q j_q} & = \sum_{q=1}^n n_q \biggl[ \ddot{\gamma}_q + \dot{\gamma}_q \sum_{p=1}^n n_p \dot{\gamma}_p \biggr] \, \e^{2 \gamma_q} \delta_{i_q j_q},
\end{align*}
with the associated Ricci scalar
\begin{align*}
    \tilde{R} = \sum_{q=1}^n n_q \biggl[ 2 \ddot{\gamma}_q + (\dot{\gamma}_q)^2 + \dot{\gamma}_q \sum_{p=1}^n n_p \dot{\gamma}_p \biggr].
\end{align*}
After imposing the Kasner-type conditions, these expressions greatly simplify and one finds
\begin{align*}
    \tilde{R}_{00} & = - (d-1) \bigl[ \ddot{\alpha} + (\dot{\alpha})^2 \bigr] - \sum_{q=1}^n n_q (\dot{\beta}_q)^2, \\
    \tilde{R}_{i_q j_q} & = \biggl[ \ddot{\alpha} + (d-1) (\dot{\alpha})^2 + \ddot{\beta}_q + (d-1) \dot{\alpha} \dot{\beta}_q \biggr] \, \e^{2 \alpha + 2 \beta_q} \delta_{i_q j_q},
\end{align*}
with the associated Ricci scalar
\begin{align*}
    \tilde{R} = 2 (d-1) \ddot{\alpha} + d (d-1) (\dot{\alpha})^2 + \sum_{q=1}^n n_q (\dot{\beta}_q)^2.
\end{align*}
Therefore, the Einstein tensor $\smash{\tilde{E}_{\mu \nu} = \tilde{R}_{\mu \nu} - \tilde{R} \, \tilde{g}_{\mu \nu}/2}$ takes the form
\begin{subequations}
    \begin{align}
        \tilde{E}_{00} & = \dfrac{(d-1)(d-2)}{2} H^2 - \dfrac{1}{2} \sum_{q=1}^n n_q (\dot{\beta}_q)^2, \\
        \tilde{E}_{i_q j_q} & = \biggl[ - (d-2) \dot{H} - \dfrac{(d-1)(d-2)}{2} H^2 + \ddot{\beta}_q + (d-1) H \dot{\beta}_q \biggr] \, a^2 \, \e^{2 \beta_q} \delta_{i_q j_q}.
    \end{align}
\end{subequations}
For consistency, each anisotropy term must satisfy the equation
\begin{align*}
    \ddot{\beta}_q + (d-1) H \dot{\beta}_q = 0,
\end{align*}
which means it evolves with respect to the scale factor as
\begin{align*}
    \beta_q (t) = \beta_q (t_0) \, \biggl( \dfrac{a(t_0)}{a(t)} \biggr)^{d-1}.
\end{align*}
Therefore, the energy density associated to the anisotropies can be computed exactly from the equation
\begin{align*}
    \dfrac{(d-1)(d-2)}{2} H^2 - \dfrac{1}{2} \sum_{q=1}^n n_q (\dot{\beta}_q)^2 = \dfrac{(d-1)(d-2)}{2} H^2 - \biggl( \dfrac{a(t_0)}{a(t)} \biggr)^{2(d-1)} \, \dfrac{1}{2} \sum_{q=1}^n n_q (\dot{\beta}_q(t_0))^2
\end{align*}
In conclusion, the $w$-parameter associated to all FLRW-metric anisotropies is $w=1$, in any dimension $d$. This agrees with the 4-dimensional result of ref. \cite{Lehners:2008vx}.

\vspace{12pt}

\bibliographystyle{JHEP}
\bibliography{report.bib}

\providecommand{\href}[2]{#2}\begingroup\raggedright\begin{thebibliography}{10}

\bibitem{Starobinsky:1980te}
A.A.~Starobinsky, \emph{{A New Type of Isotropic Cosmological Models Without Singularity}}, \href{https://doi.org/10.1016/0370-2693(80)90670-X}{\emph{Phys. Lett. B} {\bfseries 91} (1980) 99}.

\bibitem{Guth:1980zm}
A.H.~Guth, \emph{{The Inflationary Universe: A Possible Solution to the Horizon and Flatness Problems}}, \href{https://doi.org/10.1103/PhysRevD.23.347}{\emph{Phys. Rev. D} {\bfseries 23} (1981) 347}.

\bibitem{Linde:1981mu}
A.D.~Linde, \emph{{A New Inflationary Universe Scenario: A Possible Solution of the Horizon, Flatness, Homogeneity, Isotropy and Primordial Monopole Problems}}, \href{https://doi.org/10.1016/0370-2693(82)91219-9}{\emph{Phys. Lett. B} {\bfseries 108} (1982) 389}.

\bibitem{Albrecht:1982wi}
A.~Albrecht and P.J.~Steinhardt, \emph{{Cosmology for Grand Unified Theories with Radiatively Induced Symmetry Breaking}}, \href{https://doi.org/10.1103/PhysRevLett.48.1220}{\emph{Phys. Rev. Lett.} {\bfseries 48} (1982) 1220}.

\bibitem{Borde:2001nh}
A.~Borde, A.H.~Guth and A.~Vilenkin, \emph{{Inflationary space-times are incompletein past directions}}, \href{https://doi.org/10.1103/PhysRevLett.90.151301}{\emph{Phys. Rev. Lett.} {\bfseries 90} (2003) 151301} [\href{https://arxiv.org/abs/gr-qc/0110012}{{\ttfamily gr-qc/0110012}}].

\bibitem{Lesnefsky:2022fen}
J.E.~Lesnefsky, D.A.~Easson and P.C.W.~Davies, \emph{{Past-completeness of inflationary spacetimes}}, \href{https://doi.org/10.1103/PhysRevD.107.044024}{\emph{Phys. Rev. D} {\bfseries 107} (2023) 044024} [\href{https://arxiv.org/abs/2207.00955}{{\ttfamily 2207.00955}}].

\bibitem{Khoury:2001wf}
J.~Khoury, B.A.~Ovrut, P.J.~Steinhardt and N.~Turok, \emph{{The Ekpyrotic universe: Colliding branes and the origin of the hot big bang}}, \href{https://doi.org/10.1103/PhysRevD.64.123522}{\emph{Phys. Rev. D} {\bfseries 64} (2001) 123522} [\href{https://arxiv.org/abs/hep-th/0103239}{{\ttfamily hep-th/0103239}}].

\bibitem{Khoury:2001bz}
J.~Khoury, B.A.~Ovrut, N.~Seiberg, P.J.~Steinhardt and N.~Turok, \emph{{From big crunch to big bang}}, \href{https://doi.org/10.1103/PhysRevD.65.086007}{\emph{Phys. Rev. D} {\bfseries 65} (2002) 086007} [\href{https://arxiv.org/abs/hep-th/0108187}{{\ttfamily hep-th/0108187}}].

\bibitem{Shiu:2023nph}
G.~Shiu, F.~Tonioni and H.V.~Tran, \emph{{Accelerating universe at the end of time}}, \href{https://doi.org/10.1103/PhysRevD.108.063527}{\emph{Phys. Rev. D} {\bfseries 108} (2023) 063527} [\href{https://arxiv.org/abs/2303.03418}{{\ttfamily 2303.03418}}].

\bibitem{Shiu:2023fhb}
G.~Shiu, F.~Tonioni and H.V.~Tran, \emph{{Late-time attractors and cosmic acceleration}}, \href{https://doi.org/10.1103/PhysRevD.108.063528}{\emph{Phys. Rev. D} {\bfseries 108} (2023) 063528} [\href{https://arxiv.org/abs/2306.07327}{{\ttfamily 2306.07327}}].

\bibitem{VanRiet:2023cca}
T.~Van~Riet, \emph{{No accelerating scaling cosmologies at string tree level?}},  \href{https://arxiv.org/abs/2308.15035}{{\ttfamily 2308.15035}}.

\bibitem{Steinhardt:2002ih}
P.J.~Steinhardt, N.~Turok and N.~Turok, \emph{{A Cyclic model of the universe}}, \href{https://doi.org/10.1126/science.1070462}{\emph{Science} {\bfseries 296} (2002) 1436} [\href{https://arxiv.org/abs/hep-th/0111030}{{\ttfamily hep-th/0111030}}].

\bibitem{Conlon:2022pnx}
J.P.~Conlon and F.~Revello, \emph{{Catch-me-if-you-can: the overshoot problem and the weak/inflation hierarchy}}, \href{https://doi.org/10.1007/JHEP11(2022)155}{\emph{JHEP} {\bfseries 11} (2022) 155} [\href{https://arxiv.org/abs/2207.00567}{{\ttfamily 2207.00567}}].

\bibitem{Rudelius:2022gbz}
T.~Rudelius, \emph{{Asymptotic scalar field cosmology in string theory}}, \href{https://doi.org/10.1007/JHEP10(2022)018}{\emph{JHEP} {\bfseries 10} (2022) 018} [\href{https://arxiv.org/abs/2208.08989}{{\ttfamily 2208.08989}}].

\bibitem{Marconnet:2022fmx}
P.~Marconnet and D.~Tsimpis, \emph{{Universal accelerating cosmologies from 10d supergravity}}, \href{https://doi.org/10.1007/JHEP01(2023)033}{\emph{JHEP} {\bfseries 01} (2023) 033} [\href{https://arxiv.org/abs/2210.10813}{{\ttfamily 2210.10813}}].

\bibitem{Apers:2022cyl}
F.~Apers, J.P.~Conlon, M.~Mosny and F.~Revello, \emph{{Kination, meet Kasner: on the asymptotic cosmology of string compactifications}}, \href{https://doi.org/10.1007/JHEP08(2023)156}{\emph{JHEP} {\bfseries 08} (2023) 156} [\href{https://arxiv.org/abs/2212.10293}{{\ttfamily 2212.10293}}].

\bibitem{Hebecker:2023qke}
A.~Hebecker, S.~Schreyer and G.~Venken, \emph{{No Asymptotic Acceleration without Higher-Dimensional de Sitter Vacua}},  \href{https://arxiv.org/abs/2306.17213}{{\ttfamily 2306.17213}}.

\bibitem{Andriot:2023wvg}
D.~Andriot, D.~Tsimpis and T.~Wrase, \emph{{Accelerated expansion of an open universe, and string theory realizations}},  \href{https://arxiv.org/abs/2309.03938}{{\ttfamily 2309.03938}}.

\bibitem{Revello:2023hro}
F.~Revello, \emph{{Attractive (s)axions: cosmological trackers at the boundary of moduli space}},  \href{https://arxiv.org/abs/2311.12429}{{\ttfamily 2311.12429}}.

\bibitem{Heard:2002dr}
I.P.C.~Heard and D.~Wands, \emph{{Cosmology with positive and negative exponential potentials}}, \href{https://doi.org/10.1088/0264-9381/19/21/309}{\emph{Class. Quant. Grav.} {\bfseries 19} (2002) 5435} [\href{https://arxiv.org/abs/gr-qc/0206085}{{\ttfamily gr-qc/0206085}}].

\bibitem{Lehners:2007ac}
J.-L.~Lehners, P.~McFadden, N.~Turok and P.J.~Steinhardt, \emph{{Generating ekpyrotic curvature perturbations before the big bang}}, \href{https://doi.org/10.1103/PhysRevD.76.103501}{\emph{Phys. Rev. D} {\bfseries 76} (2007) 103501} [\href{https://arxiv.org/abs/hep-th/0702153}{{\ttfamily hep-th/0702153}}].

\bibitem{Koyama:2007mg}
K.~Koyama and D.~Wands, \emph{{Ekpyrotic collapse with multiple fields}}, \href{https://doi.org/10.1088/1475-7516/2007/04/008}{\emph{JCAP} {\bfseries 04} (2007) 008} [\href{https://arxiv.org/abs/hep-th/0703040}{{\ttfamily hep-th/0703040}}].

\bibitem{Tzanni:2014eja}
K.~Tzanni and J.~Miritzis, \emph{{Coupled quintessence with double exponential potentials}}, \href{https://doi.org/10.1103/PhysRevD.89.103540}{\emph{Phys. Rev. D} {\bfseries 89} (2014) 103540} [\href{https://arxiv.org/abs/1403.6618}{{\ttfamily 1403.6618}}].

\bibitem{Giambo:2014jfa}
R.~Giamb\`o, J.~Miritzis and K.~Tzanni, \emph{{Negative potentials and collapsing universes}}, \href{https://doi.org/10.1088/0264-9381/32/3/035009}{\emph{Class. Quant. Grav.} {\bfseries 32} (2015) 035009} [\href{https://arxiv.org/abs/1411.0218}{{\ttfamily 1411.0218}}].

\bibitem{Giambo:2015tja}
R.~Giamb\`o, J.~Miritzis and K.~Tzanni, \emph{{Negative potentials and collapsing universes II}}, \href{https://doi.org/10.1088/0264-9381/32/16/165017}{\emph{Class. Quant. Grav.} {\bfseries 32} (2015) 165017} [\href{https://arxiv.org/abs/1506.08162}{{\ttfamily 1506.08162}}].

\bibitem{Levy:2015awa}
A.M.~Levy, A.~Ijjas and P.J.~Steinhardt, \emph{{Scale-invariant perturbations in ekpyrotic cosmologies without fine-tuning of initial conditions}}, \href{https://doi.org/10.1103/PhysRevD.92.063524}{\emph{Phys. Rev. D} {\bfseries 92} (2015) 063524} [\href{https://arxiv.org/abs/1506.01011}{{\ttfamily 1506.01011}}].

\bibitem{Ijjas:2021zyf}
A.~Ijjas, F.~Pretorius, P.J.~Steinhardt and D.~Garfinkle, \emph{{Dynamical attractors in contracting spacetimes dominated by kinetically coupled scalar fields}}, \href{https://doi.org/10.1088/1475-7516/2021/12/030}{\emph{JCAP} {\bfseries 12} (2021) 030} [\href{https://arxiv.org/abs/2109.09768}{{\ttfamily 2109.09768}}].

\bibitem{Quevedo:2002xw}
F.~Quevedo, \emph{{Lectures on string/brane cosmology}}, \href{https://doi.org/10.1088/0264-9381/19/22/304}{\emph{Class. Quant. Grav.} {\bfseries 19} (2002) 5721} [\href{https://arxiv.org/abs/hep-th/0210292}{{\ttfamily hep-th/0210292}}].

\bibitem{Lehners:2008vx}
J.-L.~Lehners, \emph{{Ekpyrotic and Cyclic Cosmology}}, \href{https://doi.org/10.1016/j.physrep.2008.06.001}{\emph{Phys. Rept.} {\bfseries 465} (2008) 223} [\href{https://arxiv.org/abs/0806.1245}{{\ttfamily 0806.1245}}].

\bibitem{Lehners:2010fy}
J.-L.~Lehners, \emph{{Ekpyrotic Non-Gaussianity: A Review}}, \href{https://doi.org/10.1155/2010/903907}{\emph{Adv. Astron.} {\bfseries 2010} (2010) 903907} [\href{https://arxiv.org/abs/1001.3125}{{\ttfamily 1001.3125}}].

\bibitem{Cicoli:2023opf}
M.~Cicoli, J.P.~Conlon, A.~Maharana, S.~Parameswaran, F.~Quevedo and I.~Zavala, \emph{{String Cosmology: from the Early Universe to Today}},  \href{https://arxiv.org/abs/2303.04819}{{\ttfamily 2303.04819}}.

\bibitem{Brandenberger:2023ver}
R.~Brandenberger, \emph{{Superstring Cosmology - A Complementary Review}},  \href{https://arxiv.org/abs/2306.12458}{{\ttfamily 2306.12458}}.

\bibitem{Erickson:2003zm}
J.K.~Erickson, D.H.~Wesley, P.J.~Steinhardt and N.~Turok, \emph{{Kasner and mixmaster behavior in universes with equation of state w \ensuremath{>}= 1}}, \href{https://doi.org/10.1103/PhysRevD.69.063514}{\emph{Phys. Rev. D} {\bfseries 69} (2004) 063514} [\href{https://arxiv.org/abs/hep-th/0312009}{{\ttfamily hep-th/0312009}}].

\bibitem{Ijjas:2019pyf}
A.~Ijjas and P.J.~Steinhardt, \emph{{A new kind of cyclic universe}}, \href{https://doi.org/10.1016/j.physletb.2019.06.056}{\emph{Phys. Lett. B} {\bfseries 795} (2019) 666} [\href{https://arxiv.org/abs/1904.08022}{{\ttfamily 1904.08022}}].

\bibitem{Brandenberger:2016vhg}
R.~Brandenberger and P.~Peter, \emph{{Bouncing Cosmologies: Progress and Problems}}, \href{https://doi.org/10.1007/s10701-016-0057-0}{\emph{Found. Phys.} {\bfseries 47} (2017) 797} [\href{https://arxiv.org/abs/1603.05834}{{\ttfamily 1603.05834}}].

\bibitem{Buchbinder:2007ad}
E.I.~Buchbinder, J.~Khoury and B.A.~Ovrut, \emph{{New Ekpyrotic cosmology}}, \href{https://doi.org/10.1103/PhysRevD.76.123503}{\emph{Phys. Rev. D} {\bfseries 76} (2007) 123503} [\href{https://arxiv.org/abs/hep-th/0702154}{{\ttfamily hep-th/0702154}}].

\bibitem{Buchbinder:2007tw}
E.I.~Buchbinder, J.~Khoury and B.A.~Ovrut, \emph{{On the initial conditions in new ekpyrotic cosmology}}, \href{https://doi.org/10.1088/1126-6708/2007/11/076}{\emph{JHEP} {\bfseries 11} (2007) 076} [\href{https://arxiv.org/abs/0706.3903}{{\ttfamily 0706.3903}}].

\bibitem{Buchbinder:2007at}
E.I.~Buchbinder, J.~Khoury and B.A.~Ovrut, \emph{{Non-Gaussianities in new ekpyrotic cosmology}}, \href{https://doi.org/10.1103/PhysRevLett.100.171302}{\emph{Phys. Rev. Lett.} {\bfseries 100} (2008) 171302} [\href{https://arxiv.org/abs/0710.5172}{{\ttfamily 0710.5172}}].

\bibitem{Brandenberger:2020tcr}
R.~Brandenberger and Z.~Wang, \emph{{Nonsingular Ekpyrotic Cosmology with a Nearly Scale-Invariant Spectrum of Cosmological Perturbations and Gravitational Waves}}, \href{https://doi.org/10.1103/PhysRevD.101.063522}{\emph{Phys. Rev. D} {\bfseries 101} (2020) 063522} [\href{https://arxiv.org/abs/2001.00638}{{\ttfamily 2001.00638}}].

\bibitem{Brandenberger:2020eyf}
R.~Brandenberger and Z.~Wang, \emph{{Ekpyrotic cosmology with a zero-shear S-brane}}, \href{https://doi.org/10.1103/PhysRevD.102.023516}{\emph{Phys. Rev. D} {\bfseries 102} (2020) 023516} [\href{https://arxiv.org/abs/2004.06437}{{\ttfamily 2004.06437}}].

\bibitem{Brandenberger:2020wha}
R.~Brandenberger, K.~Dasgupta and Z.~Wang, \emph{{Reheating after S-brane ekpyrosis}}, \href{https://doi.org/10.1103/PhysRevD.102.063514}{\emph{Phys. Rev. D} {\bfseries 102} (2020) 063514} [\href{https://arxiv.org/abs/2007.01203}{{\ttfamily 2007.01203}}].

\bibitem{Easson:2011zy}
D.A.~Easson, I.~Sawicki and A.~Vikman, \emph{{G-Bounce}}, \href{https://doi.org/10.1088/1475-7516/2011/11/021}{\emph{JCAP} {\bfseries 11} (2011) 021} [\href{https://arxiv.org/abs/1109.1047}{{\ttfamily 1109.1047}}].

\bibitem{Ijjas:2016pad}
A.~Ijjas, J.~Ripley and P.J.~Steinhardt, \emph{{NEC violation in mimetic cosmology revisited}}, \href{https://doi.org/10.1016/j.physletb.2016.06.052}{\emph{Phys. Lett. B} {\bfseries 760} (2016) 132} [\href{https://arxiv.org/abs/1604.08586}{{\ttfamily 1604.08586}}].

\bibitem{Ijjas:2016tpn}
A.~Ijjas and P.J.~Steinhardt, \emph{{Classically stable nonsingular cosmological bounces}}, \href{https://doi.org/10.1103/PhysRevLett.117.121304}{\emph{Phys. Rev. Lett.} {\bfseries 117} (2016) 121304} [\href{https://arxiv.org/abs/1606.08880}{{\ttfamily 1606.08880}}].

\bibitem{Ijjas:2016vtq}
A.~Ijjas and P.J.~Steinhardt, \emph{{Fully stable cosmological solutions with a non-singular classical bounce}}, \href{https://doi.org/10.1016/j.physletb.2016.11.047}{\emph{Phys. Lett. B} {\bfseries 764} (2017) 289} [\href{https://arxiv.org/abs/1609.01253}{{\ttfamily 1609.01253}}].

\bibitem{Dobre:2017pnt}
D.A.~Dobre, A.V.~Frolov, J.T.~G\'alvez~Ghersi, S.~Ramazanov and A.~Vikman, \emph{{Unbraiding the Bounce: Superluminality around the Corner}}, \href{https://doi.org/10.1088/1475-7516/2018/03/020}{\emph{JCAP} {\bfseries 03} (2018) 020} [\href{https://arxiv.org/abs/1712.10272}{{\ttfamily 1712.10272}}].

\bibitem{Gratton:2003pe}
S.~Gratton, J.~Khoury, P.J.~Steinhardt and N.~Turok, \emph{{Conditions for generating scale-invariant density perturbations}}, \href{https://doi.org/10.1103/PhysRevD.69.103505}{\emph{Phys. Rev. D} {\bfseries 69} (2004) 103505} [\href{https://arxiv.org/abs/astro-ph/0301395}{{\ttfamily astro-ph/0301395}}].

\bibitem{Khoury:2003rt}
J.~Khoury, P.J.~Steinhardt and N.~Turok, \emph{{Designing cyclic universe models}}, \href{https://doi.org/10.1103/PhysRevLett.92.031302}{\emph{Phys. Rev. Lett.} {\bfseries 92} (2004) 031302} [\href{https://arxiv.org/abs/hep-th/0307132}{{\ttfamily hep-th/0307132}}].

\bibitem{Khoury:2001zk}
J.~Khoury, B.A.~Ovrut, P.J.~Steinhardt and N.~Turok, \emph{{Density perturbations in the ekpyrotic scenario}}, \href{https://doi.org/10.1103/PhysRevD.66.046005}{\emph{Phys. Rev. D} {\bfseries 66} (2002) 046005} [\href{https://arxiv.org/abs/hep-th/0109050}{{\ttfamily hep-th/0109050}}].

\bibitem{Uzawa:2018sal}
K.~Uzawa, \emph{{No-Go theorems for ekpyrosis from ten-dimensional supergravity}}, \href{https://doi.org/10.1007/JHEP06(2018)041}{\emph{JHEP} {\bfseries 06} (2018) 041} [\href{https://arxiv.org/abs/1803.11084}{{\ttfamily 1803.11084}}].

\bibitem{Obied:2018sgi}
G.~Obied, H.~Ooguri, L.~Spodyneiko and C.~Vafa, \emph{{De Sitter Space and the Swampland}},  \href{https://arxiv.org/abs/1806.08362}{{\ttfamily 1806.08362}}.

\bibitem{Ooguri:2018wrx}
H.~Ooguri, E.~Palti, G.~Shiu and C.~Vafa, \emph{{Distance and de Sitter Conjectures on the Swampland}}, \href{https://doi.org/10.1016/j.physletb.2018.11.018}{\emph{Phys. Lett. B} {\bfseries 788} (2019) 180} [\href{https://arxiv.org/abs/1810.05506}{{\ttfamily 1810.05506}}].

\bibitem{Lehners:2018vgi}
J.-L.~Lehners, \emph{{Small-Field and Scale-Free: Inflation and Ekpyrosis at their Extremes}}, \href{https://doi.org/10.1088/1475-7516/2018/11/001}{\emph{JCAP} {\bfseries 11} (2018) 001} [\href{https://arxiv.org/abs/1807.05240}{{\ttfamily 1807.05240}}].

\bibitem{meeus:mathesis}
E.~Meeus, \emph{A no-go theorem for ekpyrosis from 10d supergravity?},  2016.

\bibitem{Ooguri:2006in}
H.~Ooguri and C.~Vafa, \emph{{On the Geometry of the String Landscape and the Swampland}}, \href{https://doi.org/10.1016/j.nuclphysb.2006.10.033}{\emph{Nucl. Phys. B} {\bfseries 766} (2007) 21} [\href{https://arxiv.org/abs/hep-th/0605264}{{\ttfamily hep-th/0605264}}].

\bibitem{Bousso:1999xy}
R.~Bousso, \emph{{A Covariant entropy conjecture}}, \href{https://doi.org/10.1088/1126-6708/1999/07/004}{\emph{JHEP} {\bfseries 07} (1999) 004} [\href{https://arxiv.org/abs/hep-th/9905177}{{\ttfamily hep-th/9905177}}].

\bibitem{Bernardo:2021wnv}
H.~Bernardo and R.~Brandenberger, \emph{{Contracting cosmologies and the swampland}}, \href{https://doi.org/10.1007/JHEP07(2021)206}{\emph{JHEP} {\bfseries 07} (2021) 206} [\href{https://arxiv.org/abs/2104.00630}{{\ttfamily 2104.00630}}].

\bibitem{Bedroya:2019snp}
A.~Bedroya and C.~Vafa, \emph{{Trans-Planckian Censorship and the Swampland}}, \href{https://doi.org/10.1007/JHEP09(2020)123}{\emph{JHEP} {\bfseries 09} (2020) 123} [\href{https://arxiv.org/abs/1909.11063}{{\ttfamily 1909.11063}}].

\bibitem{Andriot:2022brg}
D.~Andriot, L.~Horer and G.~Tringas, \emph{{Negative scalar potentials and the swampland: an Anti-Trans-Planckian Censorship Conjecture}}, \href{https://doi.org/10.1007/JHEP04(2023)139}{\emph{JHEP} {\bfseries 04} (2023) 139} [\href{https://arxiv.org/abs/2212.04517}{{\ttfamily 2212.04517}}].

\bibitem{Polchinski:1995mt}
J.~Polchinski, \emph{{Dirichlet Branes and Ramond-Ramond charges}}, \href{https://doi.org/10.1103/PhysRevLett.75.4724}{\emph{Phys. Rev. Lett.} {\bfseries 75} (1995) 4724} [\href{https://arxiv.org/abs/hep-th/9510017}{{\ttfamily hep-th/9510017}}].

\bibitem{Halliwell:1986ja}
J.J.~Halliwell, \emph{{Scalar Fields in Cosmology with an Exponential Potential}}, \href{https://doi.org/10.1016/0370-2693(87)91011-2}{\emph{Phys. Lett. B} {\bfseries 185} (1987) 341}.

\bibitem{Copeland:1997et}
E.J.~Copeland, A.R.~Liddle and D.~Wands, \emph{{Exponential potentials and cosmological scaling solutions}}, \href{https://doi.org/10.1103/PhysRevD.57.4686}{\emph{Phys. Rev. D} {\bfseries 57} (1998) 4686} [\href{https://arxiv.org/abs/gr-qc/9711068}{{\ttfamily gr-qc/9711068}}].

\bibitem{Coley:1999mj}
A.A.~Coley and R.J.~van~den Hoogen, \emph{{The Dynamics of multiscalar field cosmological models and assisted inflation}}, \href{https://doi.org/10.1103/PhysRevD.62.023517}{\emph{Phys. Rev. D} {\bfseries 62} (2000) 023517} [\href{https://arxiv.org/abs/gr-qc/9911075}{{\ttfamily gr-qc/9911075}}].

\bibitem{Guo:2003eu}
Z.K.~Guo, Y.-S.~Piao and Y.-Z.~Zhang, \emph{{Cosmological scaling solutions and multiple exponential potentials}}, \href{https://doi.org/10.1016/j.physletb.2003.06.004}{\emph{Phys. Lett. B} {\bfseries 568} (2003) 1} [\href{https://arxiv.org/abs/hep-th/0304048}{{\ttfamily hep-th/0304048}}].

\end{thebibliography}\endgroup

\end{document}